\newtheorem{theorem}{Theorem}
\newtheorem{lemma}{Lemma}
\newtheorem{proposition}{Proposition}
\theoremstyle{definition}
\newcommand{\sset}[1]{\left\{ #1\right\}}
\newcommand{\fwh}[1]{\; \left| \; #1 \right.}
\newcommand{\fwhs}[1]{\; | \; #1 }
\newcommand{\opt}{\ensuremath{\mathrm{OPT}}}
\newcommand{\expectation}{\mathbb{E}}
\newcommand{\expect}[1]{\ensuremath{\expectation\left[#1\right]}}
\newcommand{\prob}[1]{\ensuremath{\mathrm{Pr}\left[#1\right]}}
\date{\empty}
\begin{document}

\title{Online trading as a secretary problem\thanks{Supported by the ERC Advanced Grant 321171 (ALGAME).}}

\author{
Elias Koutsoupias\thanks{Department of Computer Science, University of 
Oxford. Email: \texttt{\href{mailto:elias@cs.ox.ac.uk}{elias@cs.ox.ac.uk}}}
  \and
Philip Lazos\thanks{Department of Computer, Control, and Management 
Engineering ``Antonio Ruberti,'' Sapienza University of Rome, 
Email: \texttt{\href{mailto:lazos@diag.uniroma1.it}{lazos@diag.uniroma1.it}}}
}

\maketitle

\begin{abstract}
  We consider the online problem in which an intermediary trades identical items with a sequence of $n$ buyers and $n$ sellers, each of unit demand. We assume that the values of the traders are selected by an adversary and the sequence is randomly permuted. We give competitive algorithms for two objectives: welfare and gain-from-trade.
\end{abstract}

\section{Introduction}
We study the problem of facilitating trade between $n$ buyers and $n$ 
sellers that arrive online. We consider one of the simplest settings in which 
each trader, buyer or seller, is interested in trading a single item, and all 
items are identical. Each trader has a value for the item; a seller will sell to 
any price higher than its value and a buyer will buy for any price lower than 
its value. Upon encountering a trader, the online algorithm makes an 
irrevocable price offer, the trader reveals its value and, if the value is at the 
correct side of the offered price the item is traded. After buying an item 
from a seller, the online algorithm can store it indefinitely to sell it to later 
buyers. Of course, the online algorithm can only sell to a buyer if it has at 
least one item at the time of the encounter.

We consider online algorithms that offer prices based on the sequence of past values and we assume that the online algorithm knows only the number of buyers and sellers, but not their values. The values of the sellers and buyers are selected \emph{adversarially and are randomly permuted}. In that respect, the problem is a generalization of the well-known secretary problem. The secretary problem corresponds to the special case in which there are only buyers, the algorithm starts with a single item, and the objective is to maximize the total welfare, which is to give the value to a buyer with as high value as possible.

Extending this to both sellers and buyers, creates a substantially richer setting. One of the most important differences between the two settings is that besides the objective of maximizing the total welfare, we now have the objective of maximizing the gain-from-trade. For both objectives, the algorithm must buy from sellers with low values and sell to buyers with high values. The objective is that at the end, the items end up at the hands of the traders, sellers or buyers, with the highest values. The welfare of a solution is defined as the value of the buyers and sellers that have an item. The gain-from-trade of a solution is the difference between the welfare at the end of the process minus the welfare at the beginning. At optimality the two objectives are interchangeable: an algorithm achieves the maximum welfare if and only if it achieves the maximum gain-from-trade. But for approximate solutions, the two objectives are entirely different, with the gain-from-trade being the most demanding one.

The Bayesian version of the problem, in which the values of the buyers and sellers are drawn from known probability distributions has been extensively considered in the literature. Optimal mechanisms for bilateral trading, that is, the offline case of a single seller and a single buyer, were first analysed by Myerson and Satterthwaite in \cite{myerson_bilateral} and played a pivotal role in the development of the area (see the section Related Work). The online Bayesian case was considered in \cite{gkl2017}, where the values are drawn from a known distribution but the sequence is adversarially ordered.

A generalization of our model is when the items are not identical and each buyer has different value for each one of them, i.e., each seller has a value for its item and each buyer has a vector of values, one for every pair buyer-seller. This is also a generalization of the well-studied online maximum-matching problem~\cite{Korula2009,kesselheim}. One can cast the online maximum-matching problem as the version in which the sellers arrive first and have zero value for their item. The optimal online algorithm for this problem has competitive ratio $1/e$, when the objective is the welfare (which in the absense of seller values is identical to the gain-from-trade). Our model is incomparable to the online maximum-matching problem: it is simpler in the sense that the items are identical (a single value for each buyer instead of a vector of buyer-item values), and at the same time more complicated in that the items are not present throughout the process, but they are brought to the market by sellers that have their own utility. The fact that in our model the buyer-item values are related, allows for a much better competitive ratio regarding the welfare, (almost) $1$ instead of $1/e$. More importantly, our algorithm is truthful, while in contrast, no good truthful algorithm is known for the online maximum-matching problem, which remains one of the main open problems of the area. On the other hand, the introduction of sellers poses new challenges, especially with respect to the objective of the gain-from-trade.

There are also similarities between our model and the extension of the classical secretary problem to $k$ secretaries. From an influential result by Kleinberg \cite{Kleinberg_kse} we know that this problem has competitive ratio $1-1/\sqrt{k}$ which is asymptotically tight, and can be transformed into a truthful algorithm. This result depends strongly on the knowledge of $k$. In our case the equivalent measure, the \emph{number of trades} is not known from the beginning and has to be learned, with a degree of precision that is crucial, especially for the gain-from-trade objective. The fact that the gain-from-trade is not monotone as a function of time highlights the qualitative difference between the two models; the gain-from-trade temporarily \emph{decreases} when the algorithm buys an item, with the risk of having a negative gain at the end. More generally, with the mix of buyers and sellers, wrong decisions are penalized more harshly and the monotone structure of the problem is disrupted.

\subsection{Our results}

We consider the case when both the number of buyers and the number of sellers is $n$. For welfare we show a competitive ratio of $1-\tilde O(n^{-1/3})$, where $\tilde O$ hides logarithmic factors.

Actually we can compare an online algorithm with two offline benchmarks: the \emph{optimal} benchmark, in which all trades between buyers and sellers are possible, independently of their order of appearance, and the expected \emph{sequential optimal} in which an item can be transferred from a seller to a buyer only if the seller precedes the buyer in the order.

Our online algorithm achieves a competitive ratio of $1-\tilde O(n^{-1/3})$ 
against the optimal benchmark. To achieve this, it has a small sampling 
phase of length $\tilde{O}(n^{2/3})$ to estimate the \emph{median} of the 
values of all traders, and then uses it as a price for the remaining traders. 
But if the optimal number of trades is small, such a scheme will fail to 
achieve competitive ratio almost one, because with constant probability 
there will not have enough items to sell to buyers with high value. To deal 
with this risk, the algorithm not only samples values at the beginning but it 
additionally buys sufficiently many items, $\tilde{O}(n^{2/3})$, from the 
first sellers\footnote{Buying from the first sellers cannot be done truthfully 
unless the algorithm knows an upper bound on their value. But this is not 
necessary since there is an alternative that has minor effects on the 
competitive ratio: the algorithm offers each seller the maximum value of 
the sellers so far. This is a truthful scheme that buys from all but a 
logarithmic number of sellers, in expectation.}. The number 
$\tilde{O}(n^{2/3})$ of bought items balances the potential loss of the 
welfare that results from removing items from sellers to the expected loss 
from not having enough items for buyers of high values.

The term $O(n^{-1/3})$ in the competitive ratio seems to be optimal for a 
scheme that fixes the price after the sampling phase and relates to the 
number of items needed to approximate the median to a good degree. It 
may be possible to improve this term to $O(n^{-1/2})$ by a more adaptive 
scheme, as in the case of the $k$-secretary problem \cite{Kleinberg_kse}. 
Finally, it may be possible to remove the logarithmic factors from the 
competitive ratio, but we have opted for simplicity and completeness.

For the objective of gain-from-trade, we give a truthful algorithm that has a constant competitive ratio, assuming that the algorithm starts with an item. The competitive ratio is high, approximately $10^3$, but it drops to a small constant when the optimal number of trades is sufficiently high. The additional assumption of starting with an item is necessary, because without it, no online algorithm can achieve a bounded competitive ratio.

The main difficulty of designing an online algorithm for gain-from-trade is that even a single item that is left unsold at the end has dramatic effects on the gain-from-trade. The online algorithm must deal with the case of many traders, large welfare, but few optimal trades and small gain-from-trade.

To address this problem, our algorithm, unlike the case of welfare, has a large sampling phase. It uses this phase to estimate the number of optimal trades and two prices for trading with buyers and sellers. If the expected number of optimal trades is high, the algorithm uses the two prices for trading with the remaining traders. But if the number is small, it runs the secretary algorithm with the item that it starts with.

The analysis needs high concentration bounds on the expected number of trades to minimize the risk of having items left unsold. Our algorithm is ordinal, in the sense that it uses only the order statistics of the values not the actual values themselves. This leaves little space for errors and it may be possible that cardinal algorithms that use the actual values can do substantially better.

\section{Related Work}


The bilateral trade literature was initiated by 
Myerson and Satterthwaite in their seminal paper \cite{myerson_bilateral}. 
They investigated the case of a single 
seller-buyer pair and proved their famous impossibility result: there exists 
no truthful, individually rational and budget balanced mechanism that also 
maximizes the welfare (and consequently, the gain from trade).
Subsequent research studied how well these two objectives can be 
approximated by relaxing these conditions. Blumrosen and 
Mizrahi~\cite{blumrosenGFT_WINE} devised a $1/e$-approximate, Bayesian 
incentive compatible mechanism for the gain from trade assuming the 
buyer's valuation is monotone hazard rate. Brustle et al. expanded in this 
direction in~\cite{brustle2017approximating} for arbitrary valuations and 
downwards closed feasibility constraints over the allocations. In the case 
where there are 
multiple, unit demand, buyers and sellers, McAfee provided a weakly 
budget balanced, $1-1/k$ approximate mechanism for the gain from trade 
in~\cite{mcafee_dominant_1992}, where $k$ is the number of trades in the 
optimal allocation. This was later extended to be strongly budget balanced 
by  Segal-Halevi et al. in \cite{segal2016sbba}. McAfee also proved a simple 
$2$-approximation to the gain from trade if the buyer's median valuation is 
above the seller's \cite{mcafee_gains_2008}. This was significantly 
improved by 
Colini-Baldeschi et al. in~\cite{leonardiGFT} to $1/r$ and $O(\log(1/r))$, 
where $r$ is the probability that the buyer's valuation for the item is higher 
than the seller's. Recently, Giannakopoulos et al.~\cite{gkl2017} studied an 
online variant 
of this setting where buyers and sellers are revealed sequentially by an 
adversary and have known prior distributions on the value of the items.

The random order model we are using has its origins in the well-known 
secretary problem, where $n$ items arrive in online fashion 
and our goal is to maximize the probability of selecting the most valuable, 
without knowing their values in advance. The matroid secretary problem 
was 
introduced by Babaioff et al.~\cite{Babaioff_matroid_sec}. In this setting, 
we are allowed to select more than item, provided our final selection 
satisfies matroid constraints. A variety of different matroids have been 
studied, with many recent results presented by Dinitz in \cite{Dinitz:2013}. 
Of particular interest to our problem are secretary problems on bipartite 
graphs. Here, the left hand side vertices of the graph are fixed and the right 
hand side vertices (along with their incident) edges appear online. The 
selected edges must form a (incomplete) matching and the goal is to 
maximize the sum of their weights. Babaioff 
et al. in \cite{Babaioff_matroid_sec} provided a 
$4d$-competitive algorithm for the transversal matroid with bounded left 
degree $d$, which is a special case of the online bipartite matching where 
all edges connected to the same left hand side vertex have equal value. This 
was later improved to $16$ by Dimitrov and Plaxton~\cite{Dimitrov2012}. 
The case where all edges have unrelated weights was first considered by 
Korula and Pal in~\cite{Korula2009} who designed a $8$-competitive 
algorithm, which was later improved to the optimal $1/e$ by Kesselheim et 
al. \cite{kesselheim}. Another secretary variant which is close to our work is 
when the online selects $k$ items instead of one, where 
Kleinberg~\cite{Kleinberg_kse} showed 
an asymptotically tight algorithm with competitive ratio $1 - O(\sqrt{1/k})$.

The wide range of applications of secretary models (and the related prophet 
inequalities) have led to the design of posted price mechanisms, that are 
simple to describe, robust, truthful and achieve surprisingly good 
approximation ratios. Hajiaghayi et al. introduced prophet inequality 
techniques in online auction in \cite{hajiaghayi_automated_2007}. The 
$k$-choice secretary described above was then studied in 
\cite{hajiaghayi_adaptive_2004} which combined with 
\cite{Kleinberg_kse} yielded an asympotically optimal, truthful mechanism.
For more general auction settings, posted-price mechanisms have been 
used by Chawla et al. in \cite{chawla_multi-parameter_2010} for unit 
demand agents and expanded by Feldman et al. in \cite{Feldman2015} for 
combinatorial auctions and \cite{feldman} for online budgeted settings.

\section{Model and Notation}
The setting of the \emph{random intermediation} problem consists of 
sets 
$B = \{b_1,\ldots,b_n\}$ and $S = \{s_1,\ldots, s_n\}$ containing the 
valuations of the buyers and sellers. For convenience, we assume that they 
are all distinct.
The intermediary interacts with a uniformly 
random permutation $\sigma$ of $B \cup S$ which is presented to him one 
agent at a time, over $2n$ steps.  The intermediary has no knowledge of 
$\sigma(t)$ before step $t$. We use $b^i$ and $s^j$ to denote the $i$-th 
\emph{highest} valued seller and $j$-th \emph{lowest} valued seller 
respectively.

We study \emph{posted price} mechanisms that upon seeing the identity of 
agent $t$ offer price $p_t$. This price can not depend on
the entire valuation function; only the values within 
$\sigma(1)\ldots\sigma(t-1)$ which
are revealed at this point. We buy or sell one item from sellers or buyers 
who accept our price, respectively. Of course, we can only sell items if we 
have stock available.
Formally, let $\kappa_t$
be the number of items at time $t$. Starting with $\kappa_0$ items (with 
$\kappa_0=0$ for welfare and $1$ for the gain-from-trade):
\begin{equation*}
\kappa_{t+1}=
\begin{cases}
\kappa_{t}+1, 
	&\text{if}\;\; \sigma(t)\in S \; \land \; \sigma(t)\leq p_{t}\\
\kappa_{t}-1,
	&\text{if}\;\; \sigma(t)\in B \; \land \; \kappa_{t}\ge 1 \; \land \; 
	\sigma(t)\geq p_{t}\\
\kappa_{t}
	&\text{otherwise}.
\end{cases}
\end{equation*}
The set of sellers from whom we bought items during the algorithm's 
execution is $$T_S=\sset{s \in S \fwh{ \exists t~ \sigma(t) = s \leq 
p_t}}$$ and similarly the set of 
buyers we sold to is $T_B=\sset{b \in B \fwh{\exists t ~\sigma(t) = 
	b \geq p_t \land \kappa_t > 0}}$.
Notice that these are random variables, depending on $\sigma$.

The social \emph{welfare} of online algorithm $A$ is the sum of the 
valuations of all agents \emph{with} items. In particular, after executing 
$A$ it is: $\mathcal{W}_A(S,B)= \expect{\sum_{s\in S\setminus T_S} s + \sum_{b \in T_B} b}$.
The gain from trade (or GFT) produced by algorithm $A$ throughout the run 
is the 
difference between the final and starting welfare: $ 
GFT_A(S,B)=\expect{\sum_{b\in T_B} b - \sum_{s\in T_S} s}$.

We are interested in the \emph{competitive ratio} of our online algorithm 
$A$ compared to the offline algorithm $OPT$. In this setting there are two 
different offline algorithms to compare against: optimal offline and 
sequential offline. They both know $S,B$, but the first can always achieve 
the maximum welfare, whereas the second operates under the same 
constrains as we, namely he can only perform trades permitted by 
$\sigma$, which is unknown. We say that 
algorithm $A$ is $\rho$-competitive 
for welfare (or gain from trade) if for any $B,S$ we have:
\begin{equation}
\mathcal{W}_A(B,S) \ge \rho \cdot \mathcal{W}_\opt(S,B)-\alpha,
\end{equation}
for some fixed $\alpha \ge 0$. 

Often we will refer to the \emph{matching} between a set of buyers and a 
set of sellers. Let $M(S,B) = \sset{ \{S_1\} \cup \{B_1\}}$, where $S_1 
\subseteq S, B_1 \subseteq B$ is the set of sellers and buyers with whom 
we trade (or are matched, in the sense that the items move from sellers to 
buyers) in a welfare maximizing allocation and $m(S,B)$ the optimal gain 
from trade. Note that this does \emph{not} contain pairs: only the 
set of each side of the matching. Similarly, let $M(S,B,q,p)$ be the matching 
generated by only trading with 
sellers valued below $q$ and buyers above $p$. In a slight abuse of 
notation, we will use $|M(S,B)| = |S_1|$ for the size of the matching and 
$M(S,B) \cap M(S',B') = \{ \{S_1 \cap S_1'\} \cup \{B_1 \cap B_1'\}\}$.

\section{Welfare}

In order to approximate the welfare, the online algorithm uses a
sampling phase to find the median price, in an attempt to transfer
items from agents below the median to more valuable ones above it. The
two main challenges, in terms of its performance, are estimating the
median with a small sample and not missing too many trades due to the
online nature of the input.  Before we delve into the actual
algorithm, it is useful to state two probability concentration
results, similar to the familiar Azuma-Hoeffding inequality, but for
the setting where sampling happens \emph{without} replacement as is
our case.

\begin{lemma} \label{lemma:stronger-concentration}
	Let $\mathcal{X} = \{x_1,\ldots,x_N\}$ where $x_i \in \sset{0,1}$,
	$x_1 = x_2 = \ldots = x_m = 1$ and $x_{m+1} = \ldots = x_N = 0$ for some 
	integer $m \ge 0$. Consider 
	sampling $n$ values of 
	$\mathcal{X}$ uniformly at random \textbf{without} replacement and let 
	$X_i$ be the value of the $i-th$ draw. For $Y = \sum_{i=1}^n X_i$, we 
	have that for any $\epsilon > 0$:
	\begin{equation}
	\Pr[Y \ge (1+\epsilon) \expect{Y}] \le e^{-2\epsilon^2 
		\max\{m,n\}
		\frac{mn}{N^2}}
	\end{equation}
	and
	\begin{equation}
	\Pr[Y \le (1-\epsilon) \expect{Y}] \le e^{-2\epsilon^2 
		\max\{m,n\}
		\frac{mn}{N^2}}.
	\end{equation}
\end{lemma}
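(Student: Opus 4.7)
The plan is to derive the bound from the classical Hoeffding inequality for sampling without replacement, applied twice using a symmetry of the hypergeometric distribution.

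First, I would recall the bound of Hoeffding (1963) for sampling without replacement from a population of values in $[0,1]$: if $X_1,\dots,X_n$ are drawn without replacement from $\mathcal{X}$, then for the empirical mean $\bar X$ and the population mean $\mu = m/N$, the one-sided deviation bound
\[
\Pr\!\bigl[\bar X - \mu \ge t\bigr] \le e^{-2nt^2}
\]
holds. Setting $t = \epsilon \mu = \epsilon m/N$ and multiplying through by $n$ yields
\[
\Pr\!\bigl[Y \ge (1+\epsilon)\expect{Y}\bigr] \le e^{-2\epsilon^2 n m^2/N^2}.
\]
The analogous lower tail bound follows in exactly the same way.

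Next, I would exploit the well-known exchange symmetry of the hypergeometric distribution: drawing $n$ items without replacement from a population with $m$ ones and $N-m$ zeros yields a count $Y$ whose distribution coincides with the count of ones obtained by drawing $m$ items without replacement from a population with $n$ ones and $N-n$ zeros. This is immediate from counting $\binom{N}{n}^{-1}\binom{m}{k}\binom{N-m}{n-k} = \binom{N}{m}^{-1}\binom{n}{k}\binom{N-n}{m-k}$, and the mean $nm/N$ is preserved. Apply the above Hoeffding bound to this equivalent experiment, where now the sample size is $m$ and the population parameter is $n$, to obtain
\[
\Pr\!\bigl[Y \ge (1+\epsilon)\expect{Y}\bigr] \le e^{-2\epsilon^2 m n^2/N^2}.
\]

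Finally, taking whichever of the two bounds is tighter gives the exponent $-2\epsilon^2\cdot \max\{m,n\}\cdot \frac{mn}{N^2}$, matching the statement of the lemma. The lower tail is handled symmetrically. The only subtle step is the hypergeometric symmetry; everything else is a direct invocation of a standard concentration inequality, so there is no real obstacle beyond stating the symmetry cleanly and combining the two exponential bounds.
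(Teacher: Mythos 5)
Your proof is correct and arrives at exactly the stated bound, but it takes a route that differs from the paper's. The paper builds two explicit Doob martingales and applies Azuma's inequality to each: first a martingale of length $n$ that exposes the draws $X_1,\dots,X_n$ one at a time, and second a martingale of length $m$ that exposes the indicators $Z_1,\dots,Z_m$ of whether each of the $m$ ones was selected (note $Y = \sum_j Z_j$). The better of the two exponents gives the $\max\{m,n\}$ in the bound. You instead appeal directly to Hoeffding's 1963 theorem that sampling without replacement is at least as concentrated as sampling with replacement, which gives $e^{-2\epsilon^2 n m^2/N^2}$ in one line, and then use the hypergeometric exchange symmetry $\binom{N}{n}^{-1}\binom{m}{k}\binom{N-m}{n-k} = \binom{N}{m}^{-1}\binom{n}{k}\binom{N-n}{m-k}$ to swap the roles of $m$ and $n$ and obtain the second exponent $e^{-2\epsilon^2 m n^2/N^2}$. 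These are really two views of the same underlying fact: the paper's second martingale (exposing the $Z_j$'s) is precisely the exchange symmetry recast as a martingale construction. The advantage of your version is that it leans on a sharp, citable classical theorem and so transparently produces the factor $2$ in the exponent; the paper's Azuma computation is more self-contained but is actually a bit loose about constants (Azuma with increments bounded by $1$ over $n$ steps gives $e^{-t^2/(2n)}$, not the $e^{-t^2/n}$ written in their equation~\eqref{eqn:conc_1}). So your proof is not just an alternative, it is arguably the cleaner way to obtain the stated constants.
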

\begin{proof}
	Let $Y_i = \expect{Y | X_1,\ldots, X_i}$ be the  
	Doob martingale of $Y$, exposing the choices of the first $i$ draws. 
	Clearly we 
	have that $|Y_{i+1} - Y_i| \le 1$, since the knowledge of one draw 
	cannot change the expectation by more than $1$. Applying Azuma's 
	inequality, we obtain:
	\begin{equation}\label{eqn:conc_1}
	\Pr[Y_n - Y_0 \ge t] \le e^{\frac{-t^2}{n}}.
	\end{equation}
	
	Let $Z_j$ for $1\le j \le m$ indicate if $x_j$ was chosen. Since only 
	these $x_j$ contribute to $Y$, we have that $Y = \sum_{i=1}^{m} Z_i$.
	Repeating the previous martingale construction, we get:
	\begin{equation}\label{eqn:conc_2}
	\Pr[Y_m - Y_0 \ge t] \le e^{\frac{-t^2}{m}}.
	\end{equation}
	
	But, we know that $Y_0 = \expect{Y} = \expect{\sum_{i=1}^m Z_i} = m 
	\frac{n}{N}$. Setting $t = \epsilon m \frac{n}{N}$ in both 
	\eqref{eqn:conc_1} and \eqref{eqn:conc_2} and using $Y_n = Y_m = 
	Y$ we obtain:
	\begin{equation}
	\Pr[Y \ge (1+\epsilon) \expect{Y}] \le e^{-2\epsilon^2 
		\max\{m,n\}
		\frac{mn}{N^2}}.
	\end{equation}
	Concentration in the opposite direction is found by repeating the same 
	analysis, using the complementary form of Azuma's inequality.
\end{proof}
Note that this result is not superfluous: by immediately applying 
Hoeffding's inequality for sampling with 
replacement, we would obtain:
\begin{equation*}
\Pr[Y \ge (1+\epsilon) \expect{Y}] \le e^{-2\epsilon^2 
	\frac{m^2n}{N^2}},
\end{equation*}
which is only tight if $m$ is large \emph{compared to} $N$. The concentration 
should intuitively work if $n$ is a large fraction of $N$ as well: imagine $n=N$.  \\

Similarly, we often encounter a situation where we are interested in the 
number of trades between $n$ sellers and $n$ buyers, arriving in a 
uniformly random permutation. Assuming we buy from all sellers, 
occasionally we would encounter a buyer without having any items at hand.
This results shows that even though this is the case, few trades are lost.

\begin{lemma}\label{lemma:matching-size}
	The number of trades $M(\sigma)$, where $\sigma$ is a uniformly random 
	sequence containing $n$ buyers and $n$ sellers, is:
	\begin{equation}
	\expect{M(\sigma)} \ge n - \sqrt{4n \log n}\left(1 - \frac{1}{n}\right) - 2 - 
	\sqrt{2n} \ge n - \sqrt{8n\log n}
	\end{equation}
	assuming all sellers are valued below all buyers.
\end{lemma}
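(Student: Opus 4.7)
The plan is to reduce $M(\sigma)$ to a random-walk quantity and then apply Lemma~\ref{lemma:stronger-concentration}. Let $B_t$ and $S_t$ be the number of buyers and sellers among $\sigma(1),\ldots,\sigma(t)$, and let $U_t = S_t - B_t$. Because the algorithm buys from every seller and sells to every buyer whenever its stock is positive, the stock process satisfies the standard reflection identity $\kappa_t = U_t - \min(0,\min_{0\le r\le t}U_r)$. Since $U_{2n}=0$, the leftover stock is $\kappa_{2n} = \max_{0\le t\le 2n}(B_t-S_t)$, and because each leftover item corresponds to exactly one missed trade,
\begin{equation*}
n - M(\sigma) \;=\; L \;:=\; \max_{0\le t\le 2n}(B_t-S_t).
\end{equation*}

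The next step is to bound $\Pr[L\ge k]$ by concentration plus a union bound over prefix lengths. For each $t$, apply Lemma~\ref{lemma:stronger-concentration} with $N=2n$, $m=n$ (the buyers as the marked items), sample size $t$, and $\epsilon=k/t$; since $\expect{B_t}=t/2$, the lemma gives $\Pr[B_t-S_t\ge k]\le\exp(-k^2\max(n,t)/(2nt))$, which is at most $e^{-k^2/(2n)}$ uniformly in $t$ and strictly stronger, namely $e^{-k^2/(2t)}$, for $t\le n$. Summing over $t$ yields $\Pr[L\ge k]\le 2n\,e^{-k^2/(2n)}$, with the small-$t$ terms contributing only an exponentially small amount.

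Putting this together via $\expect L=\sum_{k\ge1}\Pr[L\ge k]$ and splitting the sum at $k^*\approx\sqrt{2n\log n}$: the first $k^*$ terms contribute at most $k^*$, while the tail decays super-geometrically and totals $O(1)$. Combined with $\expect M(\sigma)=n-\expect L$ this yields an inequality of the form $\expect M(\sigma)\ge n-\sqrt{2n\log n}-O(1)$; the clean multiplicative $(n-1)/n$ prefactor in the statement is recovered by a careful accounting of the tail that exploits the sharper small-$t$ bound instead of swallowing a loose additive constant. The main obstacle is precisely this calibration: a naive union bound inflates the threshold to $\sqrt{2n\log(2n)}$, so one must leverage the stronger $e^{-k^2/(2t)}$ estimate for $t\le n$ and track constants carefully when passing from per-prefix tail bounds to a bound on the maximum.
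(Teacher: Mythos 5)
Your reduction is right and matches what the paper implicitly does: with $\kappa_0=0$, the leftover stock satisfies $\kappa_{2n}=\max_{0\le t\le 2n}(B_t-S_t)=:L$ and $n-M(\sigma)=L$. The paper's process $(Y_t)$ is built precisely so that $|Y_t|=\kappa_t$ at every step, so $|Y_{2n}|=L$ there too. The divergence is in how $L$ is controlled.

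The paper does \emph{not} take a union bound over prefixes. It applies Azuma once, to the single terminal value $Y_{2n}$ of a process with increments bounded by $1$, and reads off $\Pr[|Y_{2n}|\ge\sqrt{2n\log n}]\le 1/n$ directly, hence $\expect{M(\sigma)}\ge(1-\tfrac1n)(n-\sqrt{2n\log n})$. Your union bound over $t$ cannot recover that constant, and the hand-wave at the end of your argument is where the proof genuinely fails. Concretely: for $t>n$ the per-prefix bound from Lemma~\ref{lemma:stronger-concentration} is $e^{-k^2/(2n)}$, \emph{uniformly} in $t$, so at $k=\sqrt{2n\log n}$ there are $\Theta(n)$ prefixes each contributing roughly $1/n$, giving a union bound of order $1$, not $1/n$. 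Even exploiting the sharper $e^{-k^2/(2t)}$ bound for $t\le n$ (and its mirror image for $t>n$ via the bridge symmetry $B_t-S_t\stackrel{d}{=}B_{2n-t}-S_{2n-t}$) only improves this to $\Theta(1/\log n)$: the terms $t=n-s$ for $s=O(n/\log n)$ all contribute roughly $1/n$ and there are $\Theta(n/\log n)$ of them. To push the union bound down to $1/n$ you must take $k\approx 2\sqrt{n\log n}$, a factor $\sqrt2$ larger inside the square root than the lemma states. This loss is intrinsic to bounding a running maximum by a union over prefixes; it is not a bookkeeping issue that ``careful accounting'' can repair.

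So the proposal proves a statement of the same shape but with a strictly worse constant in the $\sqrt{n\log n}$ term. That weaker version would still support the $1-\tilde O(n^{-1/3})$ asymptotics used downstream, but it does not establish the lemma as written. The fix closest to your approach that does recover the right constant is to avoid the union bound entirely: control $L$ in one shot, either by a ballot-style identity, $\Pr[L\ge k]=\binom{2n}{n+k}/\binom{2n}{n}\le e^{-k^2/(2n)}$, or by applying a bounded-difference (Azuma/McDiarmid-type) inequality to a single random variable that encodes the maximum, as the paper does with $Y_{2n}$.
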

\begin{proof}
	To calculate the number of trades $M(\sigma)$, we will compare against a 
	simpler sequence, $\hat \sigma$, which also has length $2n$ but at each step 
	a fair coin is flipped, determining if the next agent is a buyer or seller. 
	Intuitively 
	$\hat \sigma$ is like $\sigma$ with replacement and we expect it to behave 
	similarly, since the discrepancy between buyers and sellers is asymptotically 
	insignificant compared to their expected number.  
	
	\newcommand{\discr}{\ensuremath{D}}
	Let $\discr$ be the absolute difference between buyers and sellers in $\hat 
	\sigma$ and $U(\sigma)$ the number of unsold items after $\sigma$. Then:
	\begin{align}
	\expect{U(\hat \sigma)} 
	&= \sum_{i=0}^{2n} \expect{U(\hat \sigma) \fwhs{\discr = i}} \notag
	\cdot \prob{\discr = i}\\
	&\ge \sum_{i=0}^{2n} \left(\expect{U(\hat \sigma) \fwhs{\discr = 0}} - i\right)
	\cdot \prob{\discr = i} \label{ineq:all-sellers} \\
	&= \sum_{i=0}^{2n} \expect{U(\sigma)} \cdot \prob{\discr = i} 
	+ \sum_{i=0}^{2n} i \cdot \prob{\discr = i} \label{eq:discr-zero} \\
	&= \expect{U(\sigma)}  - \expect{D} \notag,
	\end{align}
	where in \Cref{ineq:all-sellers} we used that $\discr=i$ can imply at most 
	$i$ extra buyers, which means at most $i$ fewer unsold items and in 
	\Cref{eq:discr-zero} we used that $\discr = 0$ implies that $\hat \sigma$ 
	contains 
	exactly $n$ buyers and $n$ sellers and is in fact distributed exactly like 
	$\sigma$.
	
	The difference between buyers and sellers is distributed exactly like the 
	distance 
	from the origin of a one-dimensional, balanced random walk with $2n$ steps, 
	which is known to have $\expect{D} \le \sqrt{2n}$. For the remaining term we 
	need the following lemma.
	
	\begin{lemma}
		$$\expect{U(\hat \sigma)} \le \sqrt{4n \log n}\left(1 - \frac{1}{n}\right) 
		+ 
		2$$
	\end{lemma}
	\begin{proof}
		Let $Z_i$ denote the change in unsold items right after step $i$. Clearly, 
		$\sum_{i=1}^{2n} Z_i = U(\hat \sigma)$
		The process $Z_i$ is almost a martingale but not quite:  clearly 
		$\expect{Z_i} \le n$ for all $i$ and we do have 
		$\expect{Z_{i+1}|Z_i \ge 1} = Z_i$ since it is always equally likely to 
		encounter 
		a seller or a buyer. However, $\expect{Z_{i+1}|Z_i = 0} > 
		Z_i$, because if we encounter a buyer we won't be able to sell.
		
		We can define $Y_{i}$ in the same probability space, where $Y_0=0$, 
		and
		\begin{equation}
		Y_{i+1} = Y_i +  
		\begin{cases}
		Z_{i+1} &\text{if } Y_i > 0\\
		-Z_{i+1} &\text{if } Y_i < 0\\
		\begin{cases}
		Z_{i+1} &\text{with probability } \frac{1}{2}\\
		-Z_{i+1} &\text{with probability } \frac{1}{2}
		\end{cases} &\text{if } Y_i = 0
		\end{cases}.
		\end{equation}
		The crucial observation is that $Y_i$ has no barrier at 0. Notice, that $|Y_i| 
		= 
		\sum_{j=1}^i Z_i$ 
		for all $i$
		and $Y_i$ is a martingale.
		
		Moreover, we have that $|Y_{i+1} - Y_i| \le 1$ thus by the 
		Azuma-Hoeffding inequality we can bound the expected value
		$\expect{Y_{2n}}$:
		\begin{align}\label{eqn:bound_unsold}
		\Pr[|Y_{2n}| \ge x] = \Pr[|Y_{2n} - Y_0| \ge x] \le 
		2e^{\frac{-x^2}{2\cdot 2n}} \Rightarrow\\
		\expect{U(\hat \sigma)} &\le x\left(1 - 2e^{\frac{-x^2}{4n}}\right) + 
		2n e^{\frac{-x^2}{4n}},
		\end{align}
		where we also used that $U(\hat \sigma) \le 2n$ by definition.
		We can set $x = \sqrt{4n \log n}$ to obtain the simpler 
		form:
		\begin{equation}\label{ineq:bound_unsold}
		\expect{U(\hat \sigma)} \le \sqrt{4n \log n}\left(1 - \frac{1}{n}\right) 
		+ 2.
		\end{equation}
	\end{proof}
	
	Putting everything together, we have that:
	\begin{align*}
	\expect{M(\sigma)} = n - U(\sigma)
	&\ge n -  \expect{M(\hat \sigma)} - \expect{\discr}\\
	&\ge n - \sqrt{4n \log n}\left(1 - \frac{1}{n}\right) - 2 - \sqrt{2n}
	\end{align*}
\end{proof}

All the machinery is now in place analyse sequential algorithms in this 
setting. We first show a key property of the offline algorithm.
\begin{proposition}
	The optimal offline algorithm sets a price $p$, equal to the 
	median of all 
	the agents' valuations and trades items from sellers valued below $p$ to 
	buyers valued above $p$.
\end{proposition}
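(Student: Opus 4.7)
The plan is to characterize the welfare as a function of which agents end up holding items, and then show that the median-price policy attains the obvious upper bound.

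First, I would observe that there are exactly $n$ items in the system at all times (sellers start with one each, and trading is item-preserving). Hence, after the offline algorithm terminates, some set $H$ of $n$ agents ends up holding one item each, and the welfare is
\begin{equation*}
\mathcal{W} = \sum_{a \in H} v(a),
\end{equation*}
where $v(\cdot)$ denotes an agent's valuation. This immediately gives the upper bound $\mathcal{W} \le \sum_{a \in H^\star} v(a)$, where $H^\star$ is the set of $n$ agents with the largest valuations among $S \cup B$.

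Next, I would argue this bound is attainable and that the median-price rule attains it. Let $p$ be any threshold strictly between the $n$-th and $(n+1)$-th order statistics of $S \cup B$ (such a $p$ exists because all values are distinct). Let $k$ denote the number of buyers whose value exceeds $p$; then there are $n-k$ sellers with value above $p$, $k$ sellers with value below $p$, and $n-k$ buyers with value below $p$. With price $p$, the $k$ below-median sellers sell to the intermediary, the $n-k$ above-median sellers keep their items, and the $k$ above-median buyers each acquire an item (the counts match since offline trading is unconstrained in order). Therefore, the final holders are exactly the $n$ agents with value above $p$, which is $H^\star$, so the attained welfare equals the upper bound.

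Finally, to conclude optimality I would note that any other choice of final holders gives strictly smaller welfare: if $H \ne H^\star$, then $H$ omits some agent of $H^\star$ and includes some agent outside $H^\star$, and since all values are distinct the swap strictly decreases the sum. Hence no algorithm can do better than the median-price policy, which proves the proposition. There is no real obstacle here; the only subtlety is handling the border of the median, which is resolved by the distinctness assumption on valuations allowing a strict separator between the top $n$ and bottom $n$ values.
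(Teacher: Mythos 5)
Your proof takes the same approach as the paper: observe that exactly $n$ items circulate so welfare is bounded by the sum of the top $n$ valuations, and then verify that a threshold at the median transfers items precisely to that top-$n$ set. You write out the counting and the strictness of the optimum somewhat more explicitly than the paper does, but the argument is the same.
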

\begin{proof}
	Since there are only $n$ items available, if we could freely redistribute 
	the items we would choose the top $n$ agents with highest valuations. 
	Let $p$ be the value of the $n$-th most valuable agent. If there are $k$ 
	buyers valued above $p$ we have $n-k$ buyers and $k$ sellers valued 
	below it. Thus, buying from all sellers below $p$ and selling to all buyers 
	above it is an optimal algorithm.
\end{proof}

However, the optimal sequential offline algorithm would not just trade at 
this price. 
For instance, if there is $1$ buyer and $n-1$ sellers above $p$ and $1$ 
seller and $n-1$ buyers below,
trading at this price would give a $1/2$ probability of transferring the item, 
since only one transfer increases the welfare and the agents have to appear 
in the right order.
Therefore, if that buyer has a much larger valuation than anyone else, this 
algorithm would only be $1/2$-competitive. However, we can modify this 
approach with a bias towards buying more items than needed, in order to 
maximise the probability of finding high valued buyers. 
\begin{lemma} \label{lemma:sequential}
	The optimal sequential online algorithm is $\left(1-O\left(\frac{\log 
		n}{n^{1/3}}\right)\right)$-competitive against the optimal 
	offline for welfare.
\end{lemma}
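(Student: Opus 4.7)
The plan is to exhibit a two-phase sequential algorithm and analyse its welfare using the two concentration results already proved. First I would fix a window size $\ell = \Theta(n^{2/3}\log n)$ and define a \emph{sampling phase} of length $\ell$ in which the algorithm observes all revealed values and simultaneously buys from every seller that appears (via, for instance, the truthful variant sketched in the footnote). At the end of this phase the algorithm would set a threshold $\tilde p$ equal to the median of the values observed. In the subsequent \emph{trading phase} it would offer $\tilde p$ to every remaining agent, buying from sellers with value below $\tilde p$ and selling to buyers with value above $\tilde p$ whenever stock is available.

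Next I would use \Cref{lemma:stronger-concentration} to show that $\tilde p$ is close to the true median of the $2n$ valuations. Applying the lemma with the indicator ``value exceeds the true median'' and parameters $m=n$, $N=2n$, sample size $\ell$, and $\epsilon = \Theta(n^{-1/3}\sqrt{\log n})$ gives failure probability $1/\mathrm{poly}(n)$ and shows that, with overwhelming probability, $\tilde p$ lies between the $(n-\Delta)$-th and $(n+\Delta)$-th largest valuation in the whole instance, where $\Delta = O(n^{2/3}\log n)$. Consequently all but $O(\Delta)$ agents are correctly classified as above or below $\tilde p$.

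I would then turn to the trading phase and invoke \Cref{lemma:matching-size} on the subsequence of agents appearing after step $\ell$, after discarding the $O(\Delta)$ misclassified ones. That subsequence is a uniformly random permutation of roughly $n - \tilde O(n^{2/3})$ high buyers and equally many low sellers, so \Cref{lemma:matching-size} implies that all but $O(\sqrt{n\log n})$ potential trades are realised; the stockpile of $\tilde O(n^{2/3})$ items already collected during the sampling phase absorbs any residual deficit, which is what keeps the scheme robust when the optimal number of trades $|M(S,B)|$ is small.

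Finally I would convert the agent counts into a welfare bound. Each agent that is misclassified, skipped during the sampling phase, or wrongfully bought from contributes at most a near-median value to the welfare loss, so the expected loss is of order $\tilde O(n^{2/3})$ agents times the value of the $(n-\Delta)$-th largest one. Since the preceding proposition gives $\mathcal{W}_{\opt}\ge n\cdot v_{(n)}$ (taking $v_{(n)}$ to be the $n$-th largest value among all $2n$), dividing yields the claimed $1 - O(\log n/n^{1/3})$ competitive ratio up to the additive slack $\alpha$ permitted by the definition. I expect the main obstacle to be the buffering step itself: pre-purchasing from the first $\ell$ sellers removes items from whichever of them happen to be above-median, which hurts welfare directly. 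The cleanest way around this seems to be the random-order observation that only $\tilde O(n^{2/3})$ of the first $\ell$ sellers exceed the true median in expectation, so this extra loss is of the same additive order and folds into the $1 - O(\log n/n^{1/3})$ bound.
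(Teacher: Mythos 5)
Your proposal conflates \Cref{lemma:sequential} with the paper's main theorem. \Cref{lemma:sequential} concerns the \emph{sequential offline} benchmark, which already knows $S$ and $B$ (and hence the true median $p$ and the optimal matching size $M=|M(S,B)|$) but not the order $\sigma$. That benchmark needs no sampling phase, no sample median $\tilde p$, and no appeal to \Cref{lemma:stronger-concentration}: it simply posts $p$ (or, when $M$ is small, two separate prices) from the very first step. The paper's proof is correspondingly short: split on whether $M\ge n^{2/3}$; when $M$ is large, apply \Cref{lemma:matching-size} directly to the $M$ high buyers and $M$ low sellers; when $M$ is small, buy from the lowest $n^{2/3}$ sellers and sell to the highest $n^{2/3}$ buyers, which gives a $1-1/n^{1/3}$ loss on the seller side while the buyer side reduces to the first case. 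Your proposal instead builds a full sampling-based algorithm — that is the content of the subsequent theorem, not of this lemma.

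Beyond the mismatch in what is being proved, two steps in your sketch would not survive a careful write-up even as a proof of the theorem. First, the final welfare accounting is not valid: you say each lost agent ``contributes at most a near-median value,'' but the buyers skipped during the sampling window and the above-median sellers bought during it can have \emph{arbitrarily} high value — the top buyer could land in the sampling phase. The correct argument (which the paper makes via an $\expect{S_W'}$-type computation) is that, by uniformity of the random permutation, \emph{each} top agent independently lands in the sampling window with probability $\tilde O(n^{-1/3})$, so the \emph{expected} welfare lost is a $\tilde O(n^{-1/3})$ fraction of the optimum; a raw count of $\tilde O(n^{2/3})$ agents times a single cutoff value does not give this. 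Second, when $M$ is small (say $M=O(1)$ or $M=n^{1/2}$) \Cref{lemma:matching-size} gives only a constant or $n^{-1/4}$-type relative guarantee, not $n^{-1/3}\log n$; your remark that ``the stockpile... absorbs any residual deficit'' is exactly where a separate case analysis (as in the paper, conditioned on $M<n^{2/3}$, or conditioned on $B_W\le n^{2/3}\log n$ in the theorem) is required, and it cannot be folded into the same \Cref{lemma:matching-size} invocation.
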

\begin{proof}
	The optimal online algorithm adjusts the price $p$ according to the 
	following two cases, where $M = |M(S,B)|$.
	\begin{enumerate}
		\item $M \ge n^{2/3}$.
		In this case the same price $p$ is used. At the end, the online 
		algorithm will still keep the highest valued $n - M$ sellers and by 
		Lemma~\ref{lemma:matching-size} will match at least
		\begin{equation*}
		M - \sqrt{8M\log M}
		\end{equation*}
		buyers in expectation. The offline optimum will of course keep the 
		highest $n-M$ sellers and $M$ buyers, leading to a competitive ratio 
		of at most:
		\begin{equation*}
		\frac{M - \sqrt{8M\log M}}{M} = 1 - \frac{\sqrt{8\log M}}{M^{1/3}} = 1 - 
		O\left(\frac{\log n}{n^{1/3}}\right).
		\end{equation*}
		\item $M < n^{2/3}$.
		
		In this case, suppose two prices are used: $p_S$ to buy from the 
		lowest $n^{2/3}$ sellers and $p_B$ to sell to the highest $n^{2/3}$ 
		buyers.
		For the buyers, the online does at least as well as the previous case. 
		In particular, it it obtains a uniformly random sample of size at least 
		$n^{2/3} - \sqrt{8n^{2/3} \log n^{2/3}}$ (in expectation) by 
		Lemma~\ref{lemma:matching-size}, amongst the top $n^{2/3}$ 
		buyers.  Since the $M$ 
		buyers matched by the optimal offline are contained within the highest 
		$n^{2/3}$ buyers, the ratio just from buyers remains the same as 
		before.
		
		From the sellers side, the online keeps the 
		highest $n - n^{2/3}$ sellers, while the offline keeps at most $n$, for 
		a ratio at most $1 - 1/n^{1/3}$.
	\end{enumerate}
	Combining both cases, the ratio is asymptotically at most:
	\begin{equation}
	1 - O\left(\frac{\log n}{n^{1/3}}\right).
	\end{equation}
	Note that the choice of $n^{2/3}$ to separate the two cases is optimal.
\end{proof}

The next step is to design an online algorithm without knowing $p$ or 
$|M(S,B)|$ beforehand. The algorithm is as follows:
\begin{enumerate}
	\item Record the first $8n^{2/3}\log n$ agents and calculate their 
	median $p'$. Buy from all sellers during this sampling phase.
	\item After the sampling starts the trading phase:
	\begin{enumerate}
		\item Buy from seller $s$ if $s \le p'$.
		\item Sell to buyer $b$ is an item is available and $b \ge p'$.
	\end{enumerate}
\end{enumerate}

\noindent
For the analysis of this algorithm, we first need a concentration result on 
the sample median $p'$.
\begin{lemma}\label{lemma:median_concetration}
	Let $X = \{1,\ldots, 2n\}$ and select $8n^{2/3}\log n$ elements from 
	$X$ 
	\emph{without} replacement. Then, their sample median $M$ satisfies:
	\begin{equation}
	\Pr[|M-n| \ge n^{2/3}] \le O\left(\frac{1}{n}\right).
	\end{equation}
\end{lemma}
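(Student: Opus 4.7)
The plan is to reduce each tail of the sample median to counting how many sampled elements fall in a fixed subset of $X$, so that Lemma~\ref{lemma:stronger-concentration} applies directly. Write $k = 8 n^{2/3} \log n$ for the sample size and $N = 2n$ for the population.

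For the upper tail, the event $M \ge n + n^{2/3}$ is equivalent to ``at most $k/2$ of the sampled elements lie in $L := \{1,\ldots,\lfloor n + n^{2/3}\rfloor\}$''. Let $Y$ count these sampled elements and write $m := |L| = n + \Theta(n^{2/3})$. Then $\expect{Y} = km/N = k/2 + \Theta(n^{1/3}\log n)$, so setting $\epsilon := 1 - (k/2)/\expect{Y} = \Theta(n^{-1/3})$ arranges that $(1-\epsilon)\expect{Y} = k/2$. For large $n$ one has $m > k$, hence $\max\{m,k\} = m = \Theta(n)$, and Lemma~\ref{lemma:stronger-concentration} yields
\begin{equation*}
\Pr[M \ge n + n^{2/3}] \;\le\; \Pr[Y \le (1-\epsilon)\expect{Y}] \;\le\; \exp\!\left(-2\epsilon^2\, m \cdot \tfrac{mk}{N^2}\right) \;=\; \exp(-\Theta(\log n)).
\end{equation*}
A brief constant check based on the identity $(n+n^{2/3})/(n^{1/3}+1) = n^{2/3}$ shows the exponent is at least $4\log n$, hence $\Pr[M \ge n + n^{2/3}] \le n^{-4}$.

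The lower tail $\Pr[M \le n - n^{2/3}]$ is handled symmetrically: apply the upper-tail form of Lemma~\ref{lemma:stronger-concentration} to the count of sampled elements in $L' := \{1,\ldots,\lceil n - n^{2/3}\rceil\}$, whose expected value sits slightly below $k/2$. A union bound then gives $\Pr[|M - n| \ge n^{2/3}] \le 2 n^{-4} = O(1/n)$, which is in fact stronger than what is claimed.

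The one delicate step is the constant calibration inside the exponent. The gap $\expect{Y} - k/2$ is only a $\Theta(n^{-1/3})$ fraction of the mean, so $\epsilon^2$ contributes a factor of $n^{-2/3}$ that must be recovered from $\max\{m,k\}\cdot mk/N^2 = \Theta(n^{2/3}\log n)$, leaving $\Theta(\log n)$ in the exponent. The logarithmic factor in $k$ is precisely what promotes the tail bound from constant-probability to polynomial decay; this also explains why the sampling phase of the welfare algorithm was chosen to have length $8 n^{2/3}\log n$ rather than merely $O(n^{2/3})$.
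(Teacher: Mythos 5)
Your argument is correct and follows essentially the same route as the paper's: both reduce the event $\{M \ge n + n^{2/3}\}$ to counting how many sampled elements land in a fixed threshold set, then apply Lemma~\ref{lemma:stronger-concentration} with $\epsilon = \Theta(n^{-1/3})$ and $\max\{m,k\} = \Theta(n)$ to land on an exponent of $\approx 4\log n$, giving $n^{-4} = o(1/n)$. The only cosmetic difference is that you count elements in $\{1,\ldots,n+n^{2/3}\}$ and use the lower-tail form, whereas the paper counts the complementary set $\{n+n^{2/3},\ldots,2n\}$ and uses the upper-tail form; the two are equivalent.
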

\begin{proof}
	We have that:
	\begin{equation*}
	\Pr[M \ge n + n^{2/3}] 
	= \Pr[\text{more than } 4n^{2/3}\log n \text{ elements picked no less than } n 
	+n^{2/3}]
	\end{equation*}
	Since we are sampling without replacement, this is equivalent to 
	selecting $8n^{2/3}\log n$ elements uniformly at random from $X'$ 
	containing $n + n^{2/3}$ 0's and $n - n^{2/3}$ 1's and 
	having their 
	sum be greater than $4n^{2/3}\log n$. Using $X'$, $\epsilon = 
	n^{2/3}/(n-n^{2/3})$ and taking $8n^{2/3}\log n$ samples 
	in Lemma \ref{lemma:stronger-concentration}, we have:
	\begin{align*}
	\Pr[M &\ge n +n^{2/3}] = \Pr[Y \ge (1+\epsilon) \expect{Y}]
	= \Pr[Y \ge n^{2/3}] \\
	&\le \exp\left(-2\left(\frac{n^{2/3}}{(n-n^{2/3})}\right)^2 (n-n^{2/3}) 
	\frac{8n^{2/3}\log n(n-n^{2/3})}{4n^2}\right)\\
	&\le O\left(\frac{1}{n}\right).
	\end{align*}
	By symmetry, the same holds for $\Pr[M \le n -n^{2/3}]$: just 
	reverse the ordering of the agents.
\end{proof}
This shows that our sample median $p'$ might have at most $n^{2/3}$ 
agents more on one side compared to the true 
median $p$. However, this loss is negligible asymptotically, as these agents are 
a uniformly random subset of the $S\cup B$. We now show 
that buying from sellers during the sampling phase, before considering any 
buyers, can only increase the number of trades in the next phase.
\begin{lemma}\label{lemma:move-to-front}
	Let $\sigma$ be a sequence containing $n$ buyers and $n$ 
	sellers. Move an arbitrary seller the beginning of the sequence to obtain  
	$s\sigma'$. Then we have:
	\begin{equation*}
	|M(s\sigma')| \ge |M(\sigma)|.
	\end{equation*}
\end{lemma}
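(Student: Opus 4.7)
The plan is to reduce the inequality to a pointwise comparison of a simple combinatorial function of each sequence. Under the trading rule of \Cref{lemma:matching-size} (buy from every seller, sell to every buyer whenever an item is in stock), the inventory $k_t$ is a reflected walk on $\mathbb{Z}_{\ge 0}$: sellers push it up by one and buyers push it down by one with a reflecting barrier at zero. By the Skorokhod reflection identity (or a direct induction on $t$), the final inventory has the closed form
\begin{equation*}
k_{2n}(\pi) \;=\; \max_{0 \le t \le 2n} \bigl( B_t(\pi) - S_t(\pi) \bigr),
\end{equation*}
where $B_t(\pi)$ and $S_t(\pi)$ count the buyers and sellers among the first $t$ entries of $\pi$. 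Since trades equal items bought minus items left unsold, $|M(\pi)| = n - k_{2n}(\pi)$, and it suffices to prove $\max_t \Delta_t(s\sigma') \le \max_t \Delta_t(\sigma)$, where $\Delta_t(\pi) := B_t(\pi) - S_t(\pi)$.

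I would in fact establish the stronger pointwise inequality $\Delta_t(s\sigma') \le \Delta_t(\sigma)$ for every $t$. Let $\tau$ be the position of $s$ in $\sigma$ and write $\sigma = a_1 \cdots a_{\tau-1}\, s\, a_{\tau+1}\cdots a_{2n}$, so that $s\sigma' = s\, a_1 \cdots a_{\tau-1}\, a_{\tau+1}\cdots a_{2n}$. For $t \ge \tau$ the two prefixes are permutations of the same multiset $\{a_1,\ldots,a_t\}$, hence $\Delta_t$ is identical. For $t < \tau$ the prefix of $s\sigma'$ contains $s$ but not $a_t$, while the prefix of $\sigma$ contains $a_t$ but not $s$: if $a_t$ is a seller the exchange leaves $\Delta_t$ unchanged, and if $a_t$ is a buyer the exchange decreases $\Delta_t$ by exactly $2$. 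In either case $\Delta_t(s\sigma') \le \Delta_t(\sigma)$, and taking the maximum over $t$ yields the claim.

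The only genuinely substantive ingredient is the reflection-identity step in the first paragraph; once $|M(\pi)|$ has been recast as $n - \max_t \Delta_t(\pi)$, a quantity insensitive to reorderings of the prefix past $\tau$, the rest is two lines of bookkeeping. I expect this to be the main obstacle precisely because a naive coupling of the inventory processes themselves does not work: the inventories at time $\tau$ in $\sigma$ and in $s\sigma'$ really can differ (for instance on $\sigma = b,s$ versus $s\sigma'=s,b$), so one must pass through the order-independent max-of-imbalance reformulation to see the monotonicity cleanly.
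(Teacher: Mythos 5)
Your proof is correct, and it takes a genuinely different route from the paper's. The paper argues directly on the inventory process: it tags the item bought from $s$ and imagines selling it last, identifies the first buyer $b'$ not served in $\sigma$, observes that in $s\sigma'$ the tagged item is sold to $b'$ (when $s$ originally appeared after $b'$), and closes with a case analysis showing the extra sale compensates for at most one sale lost later because one fewer seller follows $b'$. This is a coupling-style argument that tracks a distinguished item along both trajectories, and it is somewhat informal in matching up the two runs once they diverge. You instead eliminate the order-dependence of the inventory entirely: writing $|M(\pi)| = n - \kappa_{2n}(\pi)$ and invoking the reflection identity $\kappa_{2n}(\pi) = \max_t\bigl(B_t(\pi) - S_t(\pi)\bigr)$ reduces the claim to a comparison of prefix imbalances, and you then establish the \emph{pointwise} inequality $\Delta_t(s\sigma') \le \Delta_t(\sigma)$ by direct inspection of what moving $s$ does to each prefix. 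The reformulation buys rigor and economy — no coupling of trajectories, no special treatment of the extremal case where every buyer is served, and the plural version (fronting several sellers at once, which the paper asserts afterward in one line) falls out for free from the same pointwise domination. Your closing remark is also correct and worth noting: the inventories $\kappa_t$ themselves are \emph{not} pointwise comparable across the two sequences (the paper's own coupling works around this by tracking only the tagged item), and that is precisely why the passage to the max-of-imbalance functional is the right move.
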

\begin{proof}
	Let $b'$ be the first buyer not to receive an item in $\sigma$. Clearly, if 
	$b'$ 
	doesn't exist then the number of items sold in both 
	cases is $n$. Assume we 
	sell the item bough from $s$ only if it is the last item 
	left. Then, it is sold to $b'$: otherwise $b'$ would not be the first buyer 
	not to be sold an item is $\sigma$. There are two cases:
	\begin{enumerate}
		\item If $s$ appears in $\sigma$ before $b'$: both sequences 
		continue identically as we have no items in stock after $b'$.
		\item If $s$ appears after $b'$ in $\sigma$: there is one fewer seller in 
		$s\sigma'$ after $b'$, since $s$ was moved to the front. However, this 
		can result in at one lost sale.
	\end{enumerate}
\end{proof}
Actually, we have shown that moving sellers to the beginning can only increase 
trades, which is slightly more powerful. We are now ready to state one of the 
main results of this section.
\begin{theorem}
	This algorithm is 
	$\left(1-\tilde{O}\left(\frac{1}{n^{1/3}}\right)\right)$-competitive 
	for welfare.
\end{theorem}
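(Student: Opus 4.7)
The plan is to show $\expect{\mathcal{W}_A} \ge (1 - \tilde O(n^{-1/3}))\, \mathcal{W}_\opt$ by conditioning on the high-probability good event from \Cref{lemma:median_concetration} that the sampled median $p'$ has rank within $n^{2/3}$ of the true median $p$, and then upper bounding the welfare loss $\mathcal{W}_\opt - \mathcal{W}_A$ by a small collection of cleanly bounded pieces. A key observation used throughout is that $\mathcal{W}_\opt$ equals the sum of the top $n$ agent values, so every agent of rank at most $n-n^{2/3}$ has value at most $\mathcal{W}_\opt/(n-n^{2/3})$; consequently, any set of $k$ near-median agents (rank in $[n-n^{2/3}, n+n^{2/3}]$) has total value at most $k\,\mathcal{W}_\opt/(n-n^{2/3})$.

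A direct case analysis on each agent (high/low relative to $p$, matched/unmatched) gives the one-sided bound
\[
\mathcal{W}_\opt - \mathcal{W}_A \;\le\; \sum_{s > p,\, s \in T_S} s \;+\; \sum_{b > p,\, b \notin T_B} b,
\]
since the terms that would enter with the opposite sign (low sellers we happened to keep, low buyers we happened to sell to) only help us. I would then split each sum into a sampling-phase part and a trading-phase part. Every above-median seller lies in the sample with probability $\frac{4 \log n}{n^{1/3}}$, so the sampled above-median sellers contribute in expectation at most $\tilde O(n^{-1/3}) \sum_{s>p} s \le \tilde O(n^{-1/3})\, \mathcal{W}_\opt$, and the analogous bound holds for sampled above-$p'$ buyers. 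The remaining trading-phase seller term is confined to sellers with $p < s \le p'$, of which there are at most $2n^{2/3}$ under the good event, contributing $O(n^{-1/3})\, \mathcal{W}_\opt$ by the near-median bound; buyers with $p < b < p'$ are handled analogously.

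The delicate piece is the remaining trading-phase buyer loss: above-$p'$ buyers who arrive when the stockpile is empty. Here I would invoke \Cref{lemma:move-to-front} to treat the sampled sellers as if they had been prepended to the trading sequence, so the trading phase effectively starts with a buffer of $k_0 = \tilde\Theta(n^{2/3})$ items. During the trading phase both the number of sellers with $s\le p'$ and the number of buyers with $b \ge p'$ differ from $M = |M(S,B)|$ by at most $O(n^{2/3})$, so a Doob-martingale/Azuma argument in the spirit of \Cref{lemma:matching-size} shows that the probability that the stockpile ever hits zero is at most $e^{-\tilde\Omega(n^{1/3})}$, contributing only $O(\mathcal{W}_\opt/n)$ to the expected loss.

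Summing the four contributions and adding the $O(1/n)$ probability of the good event failing (which contributes at most $O(\mathcal{W}_\opt/n)$) yields the claimed $(1 - \tilde O(n^{-1/3}))$ competitive ratio. The main obstacle is the trading-phase stockout analysis: without both the $\tilde\Theta(n^{2/3})$ sampling-phase buffer and \Cref{lemma:move-to-front} permitting us to view this buffer as ``prefetched'' before the trading phase begins, an early run of high-valued buyers could leave later arrivals unserved and destroy the $n^{-1/3}$ scaling.
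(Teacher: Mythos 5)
Your proposal is correct in spirit and tracks the paper's architecture: condition on the median-concentration event of \Cref{lemma:median_concetration}, use the $\tilde\Theta(n^{2/3})$ items bought during sampling as a buffer, invoke \Cref{lemma:move-to-front} to treat that buffer as prefetched, and bound the remaining loss by a martingale/Azuma argument. The decomposition is reorganized but the ingredients are the same as the paper's.

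One step, however, is stated too strongly and needs repair. You claim the probability that the stockpile ever hits zero during the trading phase is $e^{-\tilde\Omega(n^{1/3})}$, so the stockout loss is negligible. That cannot be right in general. Consider $M = |M(S,B)|$ on the order of, say, $n^{0.9}$. After sampling, there are roughly $M$ sellers below $p'$ and $M$ buyers above $p'$ in a random order, plus the $\tilde\Theta(n^{2/3})$ buffer. The symmetric-random-walk fluctuation over $2M$ steps is $\Theta(\sqrt{M}) = \Theta(n^{0.45})$, which dwarfs the $\tilde\Theta(n^{2/3})$ buffer, so the stock \emph{does} hit zero with constant probability, not exponentially small probability. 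What actually saves the bound in that regime (and what the paper's proof uses) is \Cref{lemma:matching-size}: even when stockouts occur, the expected number of lost sales is only $O(\sqrt{M \log M})$, and $\sqrt{M \log M}/M = \tilde O(M^{-1/2}) \le \tilde O(n^{-1/3})$ once $M \ge n^{2/3}$. The buffer is decisive only in the complementary regime $M \lesssim n^{2/3}$, where $\tilde\Theta(n^{2/3})$ items really do cover essentially all above-$p'$ buyers. So the argument must split on the size of $M$ (or equivalently on $B_W$, as the paper does): the buffer handles small $M$, and the \Cref{lemma:matching-size}-style expected-loss bound handles large $M$. Replace your uniform ``stockout probability is exponentially small'' claim with this two-case analysis and the proof goes through; as written, that step is a genuine gap.

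A smaller point: your one-sided loss bound $\mathcal{W}_\opt - \mathcal{W}_A \le \sum_{s>p,\,s\in T_S} s + \sum_{b>p,\,b\notin T_B} b$ is fine as an inequality but slightly wasteful, since a missed high buyer is partially compensated by the low seller you retain instead; the paper effectively exploits this by comparing against the top $n - n^{2/3}$ agents rather than the top $n$. This does not affect the order of the final bound, so it is a stylistic rather than substantive difference.
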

\begin{proof}
	As before, let $M = |M(S,B)|$ be
	the size of the optimal offline matching.  
	The following analysis assumes that the event of 
	Lemma~\ref{lemma:median_concetration} did not occur and $p$ and 
	$p'$ split the agents in two sets, differing by at most $n^{2/3}$. Given 
	this, we analyse the algorithm in three steps. First show that we never 
	buy too many items from highly valued sellers, therefore we keep most 
	of the sellers' contribution to the final welfare. Then we show that we 
	always match a high proportion of the valuable buyers by considering 
	two cases: if there are few such buyers then they are matched to the 
	sellers we obtained during the sampling phase, otherwise we have 
	enough sellers below $p'$ to match them to.
	
	We introduce some notation useful to the analysis: let $W$ be the set 
	containing the top $n-n^{2/3}$ highest valued agents. Then let $S_W, 
	B_W$ be the number of sellers and buyers respectively in $W$ and 
	$S_W',B_W'$ be how many of them appeared after the sampling phase.
	valued agents.
	To show the competitiveness of our algorithm, it suffices to find the 
	fraction of $W$ that is achieved at the end of the sequence: being 
	$(1-\tilde{O}(1/n^{1/3}))$-competitive against the top $n-n^{2/3}$ 
	agents implies a ratio of
	\begin{equation*}
	\left(1-\tilde{O}(1/n^{1/3})\right)\cdot \frac{n - n^{2/3}}{n} = 
	1-\tilde{O}(1/n^{2/3})
	\end{equation*}
	against all $n$ agents above the median and therefore the optimal offline.
	
	\paragraph{\textbf{We first show that we never lose too much 
			welfare by buying from sellers, both in the sampling and trading phase.}}
	Given $p'$, the only occasion on which a seller in $W$ is bought is 
	if he is among the first $8n^{2/3}\log n$ sellers. This event is clearly 
	independent from the condition on $p'$, meaning in expectation we keep 
	\begin{equation}\label{eqn:many_sellers}
	\expect{S_W'} = S_W\left(1 - \frac{8n^{2/3}\log n}{n}\right) = 
	S_W\left(1 - \frac{8\log n}{n^{1/3}}\right)
	\end{equation}
	highly valued sellers. Therefore, enough of the sellers' original value is 
	kept. The rest of the analysis will only focus \emph{only} the proportion of 
	buyers in $W$ who get an item. For the number of items $I_S$ bought during 
	the sampling phase, the following holds by 
	Lemma~\ref{lemma:stronger-concentration}:
	\begin{equation}
	\Pr\left[I_S \le (1 - \frac{1}{2})4n^{2/3} \log n\right] \le 
	e^{-2\frac{1}{4} 8n^{2/3}\log n \frac{n^2}{4n^2}} \le 
	e^{ -n^{2/3}},
	\end{equation}
	as there are $n$ out of $2n$ agents are sellers and we sample $8n^{2/3}\log 
	n$ of them. Therefore, we enter the trading phase with an excess of at least 
	$2n^{2/3}\log n$ items with high probability.\\

	To analyse the number of buyers in $W$ matched, we consider two cases.
	\paragraph{$\boldsymbol{B_W \le n^{2/3} \log n}$:}
	In this case there are few valuable buyers and all we need to show is that the 
	excess of items bought during sampling is enough to trade with most of them.
	We first need to find $\expect{B_W'}$, which is slightly more complicated, 
	since we have conditioned on $p'$ approximating the median.
	Given $p'$, at least $4n^{2/3}\log n$ agents were above the median 
	value 
	during the sampling phase. Note that all of the agents in $W$ are above 
	the median. Therefore, any of the agents in the upper $4n^{2/3}\log n$ 
	half of the sampling phase could be replaced by a buyer in $W$. At worst, 
	$4n^{2/3}\log n$ agents from $W$ are in the sampling phase, which means 
	that in a random permutation, we have:
	\begin{equation*}
	n^{2/3} \log n \ge \expect{B_W'} \ge B_W\left(1 - \frac{4n^{2/3}\log n}{n - 
		n^{2/3}}\right).
	\end{equation*}
	We might also consider up to $n^{2/3}$ extra buyers, if $p'$ 
	underestimated $p$.
	However, given that $I_S \ge 2n^{2/3} \log n$ with high probability, 
	every buyer in 
	$B_W'$ will be matched with an item, giving the claimed competitive 
	ratio for this case.
	
	\paragraph{$\boldsymbol{B_W > n^{2/3} \log n}$:} Let $k 
	\ge B_W$ be 
	the number of trades the optimal offline algorithm would perform.
	Since the median might be underestimated, the number of sellers we 
	consider is at least $k - n^{2/3}$ and buyers at most $k + n^{2/3}$. We 
	show that, with the help of the extra items we bought during sampling, 
	we have more items than buyers in total, with high probability. Let 
	$S_{p'},B_{p'}$ the number of sellers and buyers below and above $p'$ 
	\emph{after} the sampling phase.
	By by Lemma~\ref{lemma:stronger-concentration} we expect to find 
	\begin{align}
	\Pr[S_{p'} \le (1&-\sqrt{\frac{\log k}{k}})(k-n^{2/3})
	\frac{2n-8n^{2/3}\log n}{2n}] \\
	&\le 
	\exp\left(-2 \frac{\log k}{k} (k - n^{2/3})\frac{(2n-8n^{2/3}\log 
		n)^2}{4n^2}\right) \le O\left(\frac{1}{k}\right),
	\end{align}
	by sampling $2n-8n^{2/3}\log n$ out of $2n$ with $k - n^2/3$ important 
	elements. Similarly we have
	\begin{equation}\label{eq:extra-buyers}
	\Pr[B_{p'} \ge (1+\sqrt{\frac{\log k}{k}})(k+n^{2/3})
	\frac{2n-8n^{2/3}\log n}{2n}] \le O\left(\frac{1}{k}\right).
	\end{equation}
	It is important to note that these quantities are almost equal, other than a 
	$n^{2/3}$ factor which is insignificant compared to $k$.
	Then, with high probability (well $O(1/k)$, \emph{relatively high}):
	\begin{align}
	B_{p'} - S_{p'} &\le (1+\sqrt{\frac{\log k}{k}})(k+n^{2/3}) - 
	(1-\sqrt{\frac{\log k}{k}})(k-n^{2/3})\\
	&\le 2(\sqrt{k \log k}+n^{2/3})\\
	&\le 3(\sqrt{n \log n}),
	\end{align}
	given than $k \le n$. Since we bought at least $2n^{2/3}\log 
	n$ items during the sampling phase, the \emph{total} number of items 
	bought is higher than the total number of buyers considered for $n$ 
	large enough. Also, by Lemma~\ref{lemma:move-to-front} having these 
	items ready before encounter buyers is beneficial.
	
	Therefore, we get a lower bound on the number of buyers in $B_{p'}$ 
	that actually acquire an item using Lemma~\ref{lemma:matching-size}. The 
	number of items sold in expectation is at least:
	\begin{equation}\label{ineq:match-size}
	\mathcal{M} \ge B_{p'} - \sqrt{8 B_{p'} \log B_{p'}}.
	\end{equation}
	However, we are interested only in the fraction of buyers in $B_W$ who 
	acquired an item. The algorithm does not differentiate between any 
	buyer above $p'$, the sequence is uniformly random and all buyers in 
	$B_W$ are contained within the top $k+n^{2/3}$ buyers. By  
	lower bounding $B_{p'}$ with
	Lemma~\ref{lemma:stronger-concentration}:
	\begin{equation}
	\Pr[B_{p'} \le (1-\sqrt{\frac{\log k}{k}})(k+n^{2/3})
	\frac{2n-2n^{2/3}\log n}{2n}] \le O\left(\frac{1}{k}\right),
	\end{equation}
	and using \eqref{ineq:match-size}, the fraction of buyers in $B_W$ 
	matched is at least:
	\begin{equation}
	\frac{\mathcal{M}}{k+n^{2/3}}
	\ge 1 - \tilde{O}\left(\frac{1}{n^{2/3}}\right),
	\end{equation}
	with probability $1 - O(1/k)$, which is asymptotically high as $k \ge n^{2/3} 
	\log n$.
\end{proof}

\section{Gain from Trade}

Compared to the welfare, the gain from trade is a more challenging 
objective. The main reason is that even for large $n$, the actual trades that 
maximise the GFT can be very few and quite well hidden. Moreover, buying 
from a single seller and being unable to sell could completely shatter the 
GFT, while it could have very little effect on the welfare.

First of all, the setting has to be slightly changed. We give the online 
algorithm one extra, free item at the beginning to ensure that at least one 
buyer can acquire an item, even when the initial sampling has been 
inconclusive. For fairness, the offline algorithm is also provided with this 
starting item. We show that this modification is absolutely necessary to 
study this setting under competitive analysis.
\begin{theorem}
	Starting with no items, there exist $S,B$ such that  the competitive ratio 
	for the GFT is arbitrarily high.
\end{theorem}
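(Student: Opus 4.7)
The plan is to construct, for every algorithm $A$ and every target ratio $C$, a one-buyer one-seller ($n=1$) instance on which $A$'s expected gain-from-trade is at most $\mathrm{OPT}/C$. The intuition is that, starting with no item, the algorithm must first purchase something, and the very first pricing decision for a seller is made with no observed history at all. By forcing that decision to be ``wrong'' on an instance whose $\mathrm{OPT}$ can be made arbitrarily large, any competitive guarantee for gain-from-trade will be broken.

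Fix an algorithm $A$ and a bound $C > 0$. Let $p^*$ denote the price---a random variable in general, or simply a number if $A$ is deterministic---that $A$ posts the first time it sees an agent, conditional on that agent being a seller. Since $\Pr[p^* \ge v] \to 0$ as $v \to \infty$, choose $v_S$ large enough that $\Pr[p^* \ge v_S] \le 2/C$; for deterministic $A$, take $v_S = p^* + 1$, which makes this probability literally zero. Then pick $v_B = v_S + H$ for an arbitrarily large $H > 0$ and let $S = \{v_S\}$, $B = \{v_B\}$, so that $\mathrm{OPT} = H$.

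The analysis splits on the order of the two agents. If the seller arrives first (probability $1/2$), $A$ buys only when $p^* \ge v_S$, an event of probability at most $2/C$; in the best such case, the subsequent sale to the buyer produces a contribution of $H$. If the buyer arrives first (probability $1/2$), $A$ has no item and so cannot sell, and any subsequent purchase from the lone remaining seller produces $-v_S \le 0$ with no chance of resale, so the best possible contribution is $0$. Combining, $\mathbb{E}[\mathrm{ALG}] \le \tfrac{1}{2} \cdot \tfrac{2}{C} \cdot H + \tfrac{1}{2} \cdot 0 = H/C$, yielding $\mathrm{OPT}/\mathbb{E}[\mathrm{ALG}] \ge C$; since $C$ was arbitrary, no fixed $\rho > 0$ and fixed additive $\alpha$ can satisfy $\mathbb{E}[\mathrm{ALG}] \ge \rho\,\mathrm{OPT} - \alpha$ once $H$ is pushed large.

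The main obstacle is the randomized case: one must verify that any distribution of $p^*$ admits a finite threshold $v_S$ with upper-tail mass at most $2/C$, which is just the elementary fact that $\Pr[p^* \ge v] \to 0$. The deterministic case is the cleanest: picking $v_S$ one unit above $A$'s fixed commitment drives the buying probability to zero, so $\mathbb{E}[\mathrm{ALG}] = 0$ while $\mathrm{OPT} = H$ can be made as large as desired, defeating every fixed additive constant in the definition of competitiveness.
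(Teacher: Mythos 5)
Your proof is correct, but it takes a genuinely different route from the paper. The paper uses a two-instance indistinguishability argument: it constructs a ``good'' instance (single seller-buyer pair where trading helps) and a ``bad'' instance (an extra low seller is added and the buyer is moved below the first seller) which look identical to the algorithm up to the point where it must decide whether to buy from the first seller. It shows that any algorithm with positive GFT on the good instance must buy with positive probability on the bad one, and it tunes a parameter $\epsilon'$ so that this residual buying probability drives the algorithm's expected GFT to exactly zero while the offline GFT remains positive. Your argument instead fixes a single $n=1$ instance and exploits the posted-price structure directly: the first-seller price $p^*$ is a random variable independent of any history, so the adversary can place $v_S$ in its upper tail and then send $v_B = v_S + H$ skyward. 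This is arguably cleaner, particularly in the deterministic case, and sidesteps the distinguishability machinery entirely. What the paper's approach buys is robustness: it does not rely on the algorithm's first price having a decaying tail, so it would survive a model where the algorithm may ``always buy'' from the first seller (as the paper's own welfare algorithm does during its sampling phase).

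That last point is the one small gap you should patch. You wrote that $\Pr[p^* \ge v] \to 0$ is ``elementary,'' which is true only if $p^*$ is a.s.\ finite. If the model permits unconditional buying (equivalently $\Pr[p^* = +\infty] = q > 0$), your chosen $v_S$ cannot drive the tail below $2/C$. The fix is short: in that regime, present a seller with very large value $v_S$ and a buyer with $v_B < v_S$, so $\opt = 0$ while the algorithm's expected GFT is at most $-\tfrac{q}{2}v_S$, which can be pushed below $-\alpha$ for any fixed additive constant $\alpha$. With that case handled, your argument covers all posted-price algorithms.
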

\begin{proof}
	Consider two different valuations. The first has $s_1 = c > 0$ and 
	$b_1 = c+\epsilon$. In the second has $\hat{s}_1 = c, \hat{b}_1 = c - \epsilon, 
	\hat{s}_2  < \hat{b}_1 - \epsilon'$. We tweak the value of the buyer so that the 
	trade from instance one no longer increases welfare, but add one extra seller 
	to keep the optimal GFT positive.
	
	Let $p = \Pr[ s_1 \in T_S ~|~ \sigma^{-1}(b_1) > \sigma^{-1}(s_1)  ]$ be 
	the 
	probability of the online algorithm buying from $s_1$, conditioned on 
	$b_1$ arriving later. This must be $p > 0$,
	otherwise his expected gain from trade will be 0, compared to the 
	$\epsilon/2$ 
	generated by the offline. 
	
	However, in the second instance the algorithm should buy from 
	$\hat{s}_2$ 
	instead of $\hat{s}_1$. But, if $\hat{s}_1$ appears first, the first 
	algorithm should buy from him too, as the information received so far is 
	the same:
	\begin{align*}
	\hat{p} &= \Pr[ \hat{s}_1 \in T_S~|~\sigma^{-1}(\hat{b}_1) > 
	\sigma^{-1}(\hat{s}_1)  \land \sigma^{-1}(\hat{s}_2) > 
	\sigma^{-1}(\hat{s}_1)] \\
	&\ge p \Pr[\sigma^{-1}(\hat{s}_2) > 
	\sigma^{-1}(\hat{s}_1)] \\
	&= \frac{p}{2} > 0.
	\end{align*}
	So the online algorithm has a positive chance of buying the item from 
	the 
	wrong seller. Assuming in all other cases maximum gain from trade is 
	extracted, 
	we have:
	\begin{equation}
	GFT_A(S,B) \le \hat{p} (-\epsilon) + 
	(1-\hat{p})(\epsilon' - 
	\epsilon).
	\end{equation}
	Since $\hat{p}$ is independent of $\epsilon, \epsilon'$, we can set $$\epsilon' 
	= \frac{\epsilon}{1-\hat{p}}$$ which leads to $GFT_A(S,B) = 0$ whereas the 
	offline has $\epsilon/2$.
	
	In any case, no online algorithm can perform well in both instances.
\end{proof}

To avoid the previous pitfall, we assume the intermediary starts with one 
item. Roughly, the algorithms starts by estimating the total volume of 
trades in an optimal matching by observing the first segment of the 
sequence. Using this information, two prices $\hat{p},\hat{q}$ are 
computed, to be offered to agents in the second part. This being an ordinal 
mechanism, the goal is to maximise the number of trades \emph{and leave 
	no item unsold}. During the trading phase we are also much more 
	conservative: 
at most one item is kept in stock and we stop buying items well before the end 
of the sequence, to make sure that there are enough buyers left to sell 
everything.
The online algorithm 
$A(c,\epsilon, N)$ contains parameters 
whose values will be specified later.\\

\begin{algorithm}[H]
	\KwIn{A sequence $\sigma$ of length $2n$, appearing online.}
	\KwOut{A matching between buyers and sellers.}
	
	With probability $\frac{1}{2}$ ignore sellers and sell the item as in the 
	normal secretary, otherwise continue \;
	
	Split the sequence into two segments such that $\sigma = 
	\sigma_1 \sigma_2$, with $|\sigma_1| = c \cdot 2n$\;
	Let $S_1, B_1$ denote the sellers and buyers of $\sigma_1$\;
	Calculate the welfare maximising matching $M(S_1,B_1)$\;
	\If{$|M(S_1,B_1)| \le N$}
	{ 
		Sell the item to the highest buyer as in the normal secretary problem 
		and stop\;
	}
	
	Set $\hat{p},\hat{q}$ which only keep $(1-\epsilon)\cdot c 
	\cdot |M(S_1,B_1)|$ many matched pairs\;
	$i \leftarrow c\cdot 2n$\;
	$k \leftarrow \emptyset$\;
	$M \leftarrow \emptyset$\;
	\tcc{For the first half of $\sigma_2$, buy and sell items, keeping at most 
		one in stock}
	\While{$i \le c\cdot 2n+ (1-c)\cdot2n/2$} 
	{
		\If{$\sigma(i)$ is a seller, $k = \emptyset $ and $\sigma(i) \le 
			\hat{q}$}
		{
			$k \leftarrow \sigma(i)$\;
		}
		\If{$\sigma(i)$ is a buyer, $k \neq \emptyset$ and $\sigma(i) \ge 
			\hat{p}$}
		{
			Sell to $\sigma(i)$\;
			$k \leftarrow \emptyset$\;
		}
		$i \leftarrow i + 1$\;
	}
	For the second half of $\sigma_2$, just try to sell the last remaining 
	item, if any\;
\end{algorithm}

The idea is to use the first part of the sequence to estimate the matching 
$M(S,B)$. If a large (in terms of pairs) GFT maximising matching is 
observed, it is 
likely that a proportionate fraction of it will be contained in the second half. 
In that case, sellers and buyers are matched in non overlapping pairs to 
avoid buying too many items. However, if the observed matching is too small, 
then the algorithm defaults to selling only the starting item, as it is very likely that 
$\sigma_2$ will not contain enough buyers for anything more.

Before moving on to the analysis of the algorithm, we need a simple lemma 
on the structure of GFT maximising matchings, to explain the prices set.
\begin{lemma} \label{lemma:matchings}
	For any $S,B$ and $S_1 \subseteq S,B_1 \subseteq B$:
	\begin{enumerate}
		\item \label{lemma:prices} $m(S,B)$ can be obtained by setting 
		two threshold 
		prices $p,q$ and trading with buyers above and sellers below them.
		\item \label{lemma:subset} Choosing  $\hat{p} > p$ and $\hat{q} < q$ 
		such that 
		$|M(S,B,\hat{q},\hat{p})| \ge \alpha |M(S,B)|$ for $\alpha < 1$ yields 
		$m(S,B,\hat{q},\hat{p}) \ge \alpha m(S,B)$.
		\item $|M(S,B)| \ge |M(S_1,B_1)|$ and $m(S,B) \ge m(S_1,B_1)$.
	\end{enumerate} 
\end{lemma}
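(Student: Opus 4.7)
The plan is to handle the three claims in order, leveraging a common structural characterization of the GFT-maximizing matching.

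For the first claim, I would argue that for any fixed number of trades $k$, the GFT-maximizing choice is to match the $k$ lowest-valued sellers $s^1,\ldots,s^k$ with the $k$ highest-valued buyers $b^1,\ldots,b^k$, since swapping a matched seller for an unmatched one of lower value, or a matched buyer for an unmatched one of higher value, strictly increases $\sum b - \sum s$. Let $k^*$ be the largest $k$ with $s^k < b^k$, so adding any further pair does not help the GFT. Setting $q$ strictly between $s^{k^*}$ and $s^{k^*+1}$, and $p$ strictly between $b^{k^*+1}$ and $b^{k^*}$, selects exactly this matching as $M(S,B,q,p)$ and so achieves $m(S,B)$.

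For the second claim, the key structural observation is that since the original thresholds already separate matched from unmatched agents on each side, tightening to $\hat q < q$ removes only the highest-valued among the originally-matched sellers, and tightening to $\hat p > p$ removes only the lowest-valued among the originally-matched buyers. Hence if $k' := |M(S,B,\hat q,\hat p)|$ pairs remain, the surviving matched sellers are exactly $s^1,\ldots,s^{k'}$ and the surviving matched buyers are exactly $b^1,\ldots,b^{k'}$, giving
\[
m(S,B,\hat q,\hat p) \;=\; \sum_{i=1}^{k'} b^i - \sum_{i=1}^{k'} s^i.
\]
A standard averaging inequality now applies: among $k$ sorted-descending values, the top $k'$ sum to at least a $(k'/k)$-fraction of the total, and symmetrically among $k$ sorted-ascending values the bottom $k'$ sum to at most a $(k'/k)$-fraction. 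Subtracting these two bounds yields $m(S,B,\hat q,\hat p) \ge (k'/k)\, m(S,B) \ge \alpha\, m(S,B)$, as required.

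The third claim is immediate monotonicity: any trade realisable on $(S_1,B_1)$ remains a valid trade on $(S,B)$ because $S_1 \subseteq S$ and $B_1 \subseteq B$, so both the matching size and the optimal GFT can only weakly grow when the sets are enlarged. The main obstacle is the structural claim supporting the second part — verifying that tightening removes precisely the most marginal (highest-$s$, lowest-$b$) members of the original matching, rather than some arbitrary subset. Once this is pinned down, the symmetric averaging bounds on the buyer and seller sums deliver the $\alpha$-factor cleanly, with no need to control the gap $b^i - s^i$ for individual pairs.
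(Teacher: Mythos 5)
Your proof is correct and follows essentially the same route as the paper's, just filling in the details the paper compresses: for Property~1 the paper uses an exchange argument (a ``crossing'' pair of matched agents can be collapsed to a single more profitable trade), while you characterise the optimal matching directly as pairing the $k^*$ lowest sellers with the $k^*$ highest buyers — these are two standard ways of establishing the same threshold structure. For Property~2 you make explicit the averaging inequality (top $k'$ of $k$ buyer values is at least a $k'/k$-fraction of the total, and symmetrically for sellers) that the paper summarises in one line as ``contains the $\alpha$ highest value pairs''; this is the same idea, spelled out more carefully, and your subtraction of the two one-sided bounds is a clean way to avoid reasoning about per-pair GFT values. The observation you flag as the ``main obstacle'' — that tightening the thresholds drops exactly the highest matched sellers and lowest matched buyers, so the survivors are $s^1,\ldots,s^{k'}$ and $b^1,\ldots,b^{k'}$ — is indeed correct and does need to be said, since $\hat q < q$ and $\hat p > p$ only remove agents at the marginal ends of each matched side and the matching construction always keeps the extremal $\min$-count on the shorter side.
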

\begin{proof}
	For Property \ref{lemma:prices}, assume $s < b < \hat{s} < \hat{b}$, 
	such that 
	$s,b,\hat{s},\hat{b} 
	\in M(S,B)$. But, instead of two matches we can just match $\hat{b}$ to 
	$s$ instead: $\hat{b} - s > b - s+ \hat{b} - \hat{s}$, thus any such 
	pair of matched agents cannot be part of $M(S,B)$. Setting $q = 
	\max \sset{ s \in M(S,B)}$ and $p = \min \sset{ b
		\in M(S,B)}$ we have $q < p$ and the result follows. This is essentially 
	the same observation as using the median price to trade, but using two 
	different prices for robustness, as we will see later.\\
	
	Property \ref{lemma:subset} follows because $M(S,B,\hat{q},\hat{p})$ 
	contains the $\alpha$ highest value pairs for $M(S,B)$.
	Property 3 is straightforward.
\end{proof}

\begin{theorem}
	$A(c=0.3, \epsilon = 0.2758, N = 114)$ is $O(1)$-competitive for the 
	gain from trade.
\end{theorem}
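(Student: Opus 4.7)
I would split the analysis according to the size of the optimal matching, $k^\star = |M(S,B)|$, and argue that for every instance at least one of the two algorithm branches secures $\Omega(m(S,B))$. Since each branch is taken with probability $1/2$, the algorithm captures a constant fraction of the GFT of whichever branch dominates on the given instance; it is therefore enough to treat the two branches in isolation. The parameters $c,\epsilon,N$ appear only at the very end, as the solution of the system of inequalities produced by the analysis below.

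\textbf{Small matching regime.} When $k^\star$ is below a suitable constant (of order $N/c$), the optimum is supported on few pairs and so the single best trade $\Delta^\star = \max_{(s,b)\in M(S,B)}(b-s)$ already satisfies $\Delta^\star \ge m(S,B)/k^\star = \Omega(m(S,B))$. On this event the first branch dominates: with probability $1/2$ the algorithm ignores sellers and runs the classical secretary on the starting item, obtaining the highest-valued buyer $b^1$ with probability at least $1/e$. Since the starting item is free, the GFT contribution is $b^1$, which is at least $\Delta^\star$, yielding $\Omega(m(S,B))$ in expectation. The sampling branch may additionally trigger its default secretary line (since $|M(S_1,B_1)|\le N$ is likely), but this only helps.

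\textbf{Large matching regime.} Once $k^\star$ exceeds the threshold, Lemma \ref{lemma:stronger-concentration}, combined with Lemma \ref{lemma:matchings}(3), shows that $|M(S_1,B_1)|$ is within a $(1\pm o(1))$ factor of $c\,k^\star > N$ with high probability, so the pricing phase is entered. By Lemma \ref{lemma:matchings}(2), cutting the sample matching down to its top $(1-\epsilon)\,c\,|M(S_1,B_1)|$ pairs preserves at least a $(1-\epsilon)\,c$ fraction of $m(S_1,B_1)$, and the resulting thresholds $\hat p,\hat q$ are, by Lemma \ref{lemma:matchings}(1), realised as price cutoffs on all of $S\cup B$. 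A further application of Lemma \ref{lemma:stronger-concentration} then shows that in the first half of $\sigma_2$ the numbers of sellers below $\hat q$ and buyers above $\hat p$ are each concentrated around a constant fraction of $k^\star$, and by a reflected-walk argument identical to the proof of Lemma \ref{lemma:matching-size} a constant fraction of these pairs is actually traded when the inventory is capped at one. A leftover item at the end of the first half of $\sigma_2$ occurs with vanishing probability (many qualifying buyers remain in the second half), and even in the worst case its cost is at most $\hat q \le m(S,B)/k^\star$, which is negligible.

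\textbf{Main obstacle.} The hard part is the bookkeeping: three concentration inequalities must hold simultaneously --- one for the size of the sampled matching, one for the number of qualifying traders in $\sigma_2$, and one for the reflected random walk modelling the inventory --- and the product of these surviving probabilities with the recovered fraction $(1-c)(1-\epsilon)c/2$ of $m(S,B)$ must dominate the potential loss from an unsold unit and from the small-matching cutoff at $N$. Choosing $N$ large enough that the secretary branch already handles $k^\star\le N/c$ via the bound $\Delta^\star\ge m(S,B)/N$, and $c,\epsilon$ away from $0$ and $1$, makes all of these inequalities compatible; numerically optimising the resulting expression is what produces $c=0.3,\ \epsilon=0.2758,\ N=114$ and a competitive ratio on the order of $10^3$.
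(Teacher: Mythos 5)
Your overall plan — split by the size $z=|M(S,B)|$ of the optimal matching, use the secretary branch for small $z$, use concentration plus the pricing phase for large $z$, then optimize $c,\epsilon,N$ numerically — matches the paper's. However there are two concrete gaps that would prevent the argument from closing.

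\paragraph{The unsold-item loss cannot be bounded by $m(S,B)/k^\star$.} You claim ``even in the worst case its cost is at most $\hat q \le m(S,B)/k^\star$, which is negligible.'' This is not true: $\hat q$ is a \emph{seller value}, and seller values have no a priori relation to the gain from trade. For example, if every seller has value $10^6$ and every buyer $10^6+1$, then $m(S,B)=n$ but the loss from one unsold item is $\approx 10^6$. The correct bound is $\hat q \le q = \max\{s\in M(S,B)\} = s^\star \le b^1$ (since the highest matched seller is below its matched buyer). The paper bounds the potential loss by the full $s^\star$, and then uses the fact that $s^\star\le b^1$ so the secretary branch's guaranteed $\frac{1}{2e}b^1$ can absorb it, provided the failure probability $1-J(c,\epsilon,z)$ is below $\frac{1}{2e}$. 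This compensation is what actually forces $N$ to be as large as $114$ --- the failure probabilities involve $e^{-\Theta(z)}$ and $2^{-\Theta(z)}$ terms which are only sufficiently small once $z$ (hence $N$) is large, and for $z$ near the threshold they are a non-vanishing constant, not ``vanishing.''

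\paragraph{The two branches cannot be treated in isolation.} Because the sampling branch can have \emph{negative} expected GFT (precisely because of the unsold item just discussed), one cannot argue ``whichever branch dominates gives $\Omega(m(S,B))$.'' The expected GFT is the \emph{average} of the two branches, so if the sampling branch is $-s^\star$ with some constant probability, the secretary branch must literally pay for it. The paper writes the expected GFT as
\[
\tfrac{1}{2e}b^1 + \tfrac{J}{2}\cdot\tfrac{((1-c)(1-\epsilon)-\frac12)m(S,B)}{4} - \tfrac{1-J}{2}s^\star
\]
and needs $\frac{1}{2e}-(1-J)\ge 0$; this is a joint constraint across branches. Relatedly, your two-case split misses the intermediate regime $N\le z< N/(c(1-\epsilon))$, where the algorithm may or may not detect a matching in $\sigma_1$ --- in the well-mixed event it will \emph{not} detect one and falls back to the secretary, while in the complementary event it may enter the pricing phase and incur the $-s^\star$ risk. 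The paper treats this as a third case with its own bound. Fixing the unsold-item accounting and restoring the cross-branch coupling recovers the paper's argument.
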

\begin{proof}
	Let $z = |M(S,B)|$.  We bound the gain from trade for the case 
	where $\sigma_1$,
	$\sigma_2$ contain their analogous proportion of $M(S,B)$ and show 
	that the losses are insignificant otherwise.
	In particular, let 
	\begin{align*}
	&f(c,\epsilon,z)\\
	&= \prob{\frac{M(S,B) \cap M(S_1,B_1)}{M(S,B)} \ge 
		c(1-\epsilon) \land \frac{M(S,B) \cap M(S_2,B_2)}{M(S,B)} \ge 
		(1-c)(1-\epsilon)}
	\end{align*}
	be the \emph{well mixed} probability, where an 
	$\epsilon$-approximate chunk of the matching appears in both parts. 
	The two 
	events are not independent.
	To bound $f(c,\epsilon,z)$, it suffices to study the distribution of 
	$S_M = \sset{s \in M(S,B)}$ and $B_M = \sset{b \in 
		M(S,B)}$, the sets of agents comprising the optimal matching. By 
	Lemma~\ref{lemma:matchings}, we know that any seller in $S_M$ can 
	be matched to any buyer in $B_M$. Since we only care about the 
	\emph{size} of the matching in $\sigma_1$ and $\sigma_2$, not its 
	actual value, we can rewrite $f(c,\epsilon,z)$ as:
	\begin{align}\label{eqn: well-mixed}
	f(c,\epsilon,z) = \Pr\Bigg[ &\frac{|S_M \cap S_1|}{|S_M|} \ge 
	c(1-\epsilon) \land \frac{|B_M \cap B_1|}{|B_M|} \ge 
	c(1-\epsilon) \land \\
	&\frac{|S_M \cap S_2|}{|S_M|} \ge 
	(1-c)(1-\epsilon) \land \frac{|B_M \cap B_2|}{|B_M|} \ge 
	(1-c)(1-\epsilon) \Bigg],
	\end{align}
	which is easier to handle. 
	
	It is useful to think the input as 
	being created in two steps: first the \emph{volume} of agents in 
	$S_1,B_1, 
	S_2, B_2$ is chosen and \emph{afterwards} their exact values are 
	randomly 
	assigned. As such, a lower bound on the fraction of the size of the online 
	to the offline matching 
	provides the same bound on the gain from trade. We begin by 
	bounding 
	$f(c,\epsilon,z)$.
	
	\begin{lemma} \label{lem:good_case}
		The probability the matching is well-mixed is $$f(c,\epsilon,z) \ge 
		1 
		- 2(e^{-2 \epsilon^2 z  c^2} + 
		e^{-2\epsilon^2 z  (1-c)^2 })$$ 
	\end{lemma}
	\begin{proof}
		Continuing from \eqref{eqn: well-mixed} we have: 
		\begin{align}
		f(c,\epsilon,z) &=
		\Pr\Bigg[ \frac{|S_M \cap S_1|}{|S_M|} \ge 
		c(1-\epsilon) \land \frac{|B_M \cap B_1|}{|B_M|} \ge 
		c(1-\epsilon) \land \nonumber\\
		&\frac{|S_M \cap S_2|}{|S_M|} \ge 
		(1-c)(1-\epsilon) \land \frac{|B_M \cap B_2|}{|B_M|} \ge 
		(1-c)(1-\epsilon) \Bigg] \nonumber\\ 
		&\ge 1 - \prob{\frac{|S_M \cap S_1|}{|S_M|} \le 
			c(1-\epsilon)} - \prob{\frac{|B_M \cap B_1|}{|B_M|} \le 
			c(1-\epsilon)} - \nonumber\\
		&\prob{\frac{|S_M \cap S_2|}{|S_M|} \le 
			(1-c)(1-\epsilon)}  - \prob{\frac{|B_M \cap B_2|}{|B_M|} \le 
			(1-c)(1-\epsilon)}\label{good_case:union-bound}\\
		&\ge 
		1 - 2(e^{-2 \epsilon^2 z  c^2} + 
		e^{-2\epsilon^2 z  (1-c)^2 }),
		\label{good_case:corollary}
		\end{align}
		where \eqref{good_case:union-bound} follows by taking the 
		complement and a union bound and \eqref{good_case:corollary} by 
		applying \Cref{lemma:stronger-concentration} individually 
		for each event.
	\end{proof}
	
	Let $p$ and $q$ be the prices achieving the matching 
	$M(S,B)$, by \Cref{lemma:matchings}. We need to show that the 
	prices $\hat{p},\hat{q}$ 
	computed achieve a constant approximation of $m(S_2,B_2)$. 
	Since $M(S,B)$ is well mixed and by using 
	\Cref{lemma:matchings} we have that:
	\begin{equation}\label{eq: matching_sizes}
	|M(S,B)| \ge |M(S_1,B_1)| \ge |M(S_1,B_1,q,p)| \ge (1-\epsilon)\cdot c 
	\cdot |M(S,B)|,
	\end{equation}
	where the second inequality holds since $M(S_1,B_1)$ is a gain from 
	trade 
	maximising matching and the third because at least a 
	$(1-\epsilon)\cdot c$ fraction of $M(S,B)$ appeared in $\sigma_1$. 
	In particular, we have that $M(S_1,B_1,q,p) \subseteq M(S_1,B_1)$ is 
	the 
	highest value part of $M(S_1,B_1)$ and 
	$M(S_1,B_1, \hat{q}, \hat{p}) \subseteq M(S_1,B_1,q,p)$, thus $\hat{q} 
	\le 
	q$ and $\hat{p} \ge p$ leading to:
	\begin{equation}\label{ineq:at_best_total}
	|M(S_1,B_1,\hat{q},\hat{p})| \ge (1-\epsilon)^2 c^2 |M(S,B)|
	\end{equation}
	by \cref{eq: matching_sizes}. Therefore, the prices $\hat{p},\hat{q}$ 
	computed find a relatively large \emph{subset} of $M(S,B)$.  We now need 
	to find just how many of 
	the trades in 
	$M(S_2,B_2,\hat{p},\hat{q})$ are achieved by our algorithm. Let 
	$\hat{S}_2 = \sset{s \fwh{s \in 
			S_2 \land s < \hat{q}} }$ and $\hat{B}_2 = \sset{b 
		\fwh{b \in B_2 \land b > 
			\hat{p}} }$. We need a high probability guarantee on the size of 
	$\hat{S}_2$ and $\hat{B}_2$.
	
	\begin{lemma}\label{lemma:size-in-s2}
		Assuming the matching is well mixed: $$\prob{|\hat{S}_2| \ge 
			((1-c)(1-\epsilon) - \frac{1}{2}) 
			|S_M|} \ge 1 - 2^{-c^2(1-\epsilon)^2 |S_M|}.$$ 
	\end{lemma}
	\begin{proof}
		In the well-mixed case, we have that
		\begin{equation}\label{ineq:mixing1}
		|S_2 \cap S_M| \ge (1-c)(1-\epsilon)|S_M| \text{ and } |S_1 \cap S_M| 
		\ge c(1-\epsilon)|S_M| 
		\end{equation}
		which leads to
		\begin{equation}
		|S_1 \cap S_M| \le (1 - (1-c)(1-\epsilon))|S_M|.
		\end{equation}
		To get a lower bound on the size, we have:
		\begin{align}
		\prob{ |\hat{S}_2| \ge \frac{|S_M|}{2} - |S_1 \cap S_M| } 
		&\ge \prob{ |\hat{S}_2| \ge ((1-c)(1-\epsilon) - \frac{1}{2}) |S_M|} 
		\label{ineq:mixing2}\\
		&\ge \prob{ \hat{q} \ge \text{median}(S_M)}\label{ineq:median}\\
		&\ge 1 - 2^{-c^2(1-\epsilon)^2 |S_M|}\label{ineq:independent}
		\end{align}
		where \cref{ineq:mixing2} follows from \cref{ineq:mixing1}. 
		\cref{ineq:median} follows since if $\hat{q}$ is greater than the median, 
		then at worst case all elements from $S_1 \cap S_M$ are less than 
		$\hat{q}$, which still leaves plenty of sellers in $S_2$. 
		\cref{ineq:independent} follows since draws are not actually independent, 
		but this works in the inequality's favour. From \cref{ineq:at_best_total} we 
		know $\hat{q}$ is greater than at least a $c^2(1-\epsilon)^2$ 
		fraction of sellers. Since the `bad case' is choosing all sellers below the 
		median, this happens with higher probability if each draw is \emph{with} 
		rather than \emph{without} replacement, leading to the result.
	\end{proof}

	\noindent Clearly, \Cref{lemma:size-in-s2} holds for buyers as well. The 
	proof is almost identical, keeping in mind that buyers are ordered the 
	opposite way. \\
	
	\noindent At this point we have a clear indication of how many sellers and 
	buyers the prices $\hat{p},\hat{q}$ cover in the second part of the 
	sequence. Since this is an ordinal mechanism, we want to maximise the 
	number of trades \emph{provided no item is left unsold}. There are no 
	a priori guarantees on the welfare increase of each trade, even a single 
	unsold item ruins our gain from trade guarantees, in the worst case.
	
	\begin{lemma}\label{lemma:online_matching}
		Let $A = |M(S_2,B_2,\hat{q},\hat{p})|$ and $B = |S_2| + |B_2| - A$. 
		Then, 
		the probability that no item is left unsold is at least $1-2^{-A}$. 
		Moreover, the expected number of trades in this case is at least:
		\begin{equation}
		\frac{\frac{A+B/2-1}{2A+B-1} \cdot \frac{A}{2} - \frac{A}{2^A}}{1 - 
			2^{-A}} \approx \frac{|M(S_2,B_2,\hat{q},\hat{p})|}{4}.
		\end{equation}
	\end{lemma}
	\begin{proof}
		We begin by calculating the probability of having an unsold item, which 
		is 
		easy: it is at most as much as the probability of not encountering a buyer 
		within the last $(1-c)\cdot 2n/2$ agents. Using a similar argument as 
		\Cref{lemma:size-in-s2}, this probability is at most $2^{-A}$.\\
		
		\noindent We now need to calculate the expected number of trades. Let 
		$X_i$ be a random variable indicating that an item was \emph{sold} to 
		the $i$-th agent. We have:
		\begin{align*}
		\prob{X_i = 1} &= \prob{\text{ previous 
				transaction was buying } \land X_i \text{ is a buyer }} \\
		&= \frac{A}{2A}\cdot \frac{A}{2A+B-1} = \frac{1}{2} \cdot 
		\frac{A}{2A+B-1},
		\end{align*}
		since the previous transaction being buying from a seller occurs with 
		probability $\frac{A}{2A}$ as there are $A$ sellers in 
		$M(S_2,B_2,\hat{p},\hat{q})$ for $2A$ total agents and the sequence is 
		shuffled. The second fraction has $2A+B-1$ for the denominator, taking 
		into account that one seller has already been used.\\
		
		\noindent
		By linearity of expectation, the total number of trades $X$ is (note that we 
		only consider the first half of $\sigma_2$, where we both buy \emph{and} 
		sell):
		\begin{align*}
		\expect{X} &\ge \expect{\sum_{i=2}^{(2A+B)/2} X_i} = (A+B/2 - 1) 
		\expect{X_i} \\
		&= (A+B/2 - 1)\frac{1}{2} \cdot \frac{A}{2A+B-1}\\
		&= \frac{A+B/2-1}{2A+B-1} \cdot \frac{A}{2} 
		\end{align*}\\
		
		\noindent
		We can use this to calculate the expected number of trades in the case 
		where nothing is left unsold:
		\begin{align*}
		\expect{X | \text{No items unsold}} &= \frac{\expect{X} - \expect{X | 
				\text{ Unsold items}}\prob{\text{Unsold Items}}}{\prob{\text{No unsold 
					items}}} \\
		&\ge \frac{\expect{X} - \frac{A}{2^{-A}}}{1 - 2^{-A}}
		\end{align*}
	\end{proof}

	Everything is now in place to provide a lower bound on the gain from trade 
	of the 
	matching calculated by the algorithm. Assuming $z = |M(S,B)|$, we can 
	compose 
	\Cref{lem:good_case}, \Cref{lemma:size-in-s2} and 
	\Cref{lemma:online_matching} to show that with probability at least
	\begin{equation}
	J(c,\epsilon,z) = f(c,\epsilon,z) \cdot (1 - 2^{-c^2(1-\epsilon)^2z})\cdot 
	(1-2^{-((1-c)(1-\epsilon) - \frac{1}{2})z}),
	\end{equation}
	the matching has size at least
	\begin{equation}
	\frac{((1-c)(1-\epsilon) - \frac{1}{2})z}{4}.
	\end{equation}
	The matching is not a uniformly random subset of $M(S,B)$, but it is skewed 
	to contain higher value trades since $\hat{p} > p$ and $\hat{q} < q$. 
	Taking into account that we run a simple secretary algorithm with 
	probability $1/2$ and assuming we lose the highest valued seller $s^\star$ 
	in our matching when the agents are not well mixed (we can only have one 
	unsold item) the GFT is at least:
	\begin{equation}
	\frac{1}{2e}b^1 + \frac{J(c,\epsilon,z)}{2}\cdot \frac{((1-c)(1-\epsilon) - 
		\frac{1}{2})m(S,B)}{4} - \frac{1 - J(c,\epsilon,z)}{2}s^\star
	\end{equation}
	whereas the offline GFT is at most
	\begin{equation}
	m(S,B) + b^1.
	\end{equation}
	To upper bound the competitive ratio $\rho$ we analyse three different cases:
	\begin{enumerate}
		\item If $z < N$:\\
		In this case the algorithm would never detect a sufficiently sized 
		matching and would always run a simple secretary algorithm. Note this is 
		possible, as $c = 0.3 \le 1/e$ required for the secretary.
		\begin{equation}
		\rho \ge \frac{\frac{1}{e} b^1}{m(S,B) + b^1} \ge \frac{1}{e}\cdot 
		\frac{b^1}{(z+1)b^1} = \frac{1}{e(z+1)}
		\end{equation}
		\item If $N \le z < N\frac{1}{c(1-\epsilon)}$. In the well-mixed case, the 
		online algorithm will not detect a matching and fall back to secretary. 
		Therefore, the competitive ratio is:
		\begin{equation}
		\rho \ge \frac{
			\frac{1}{2e}b^1 + \frac{1}{2}(f(c,\epsilon,z) 
			\frac{1}{e}b^1 - (1-f(c,\epsilon,z)) s^\star
		}
		{
			m(S,B)+b^1
		}
		\ge \frac{
			1-e + (1+e)f(c,\epsilon,z)
		}
		{
			2e(z+1)
		}.
		\end{equation}
		given that $c < 1/e$ and the sampling phase for the secretary continues.
		\item $z \ge N\frac{1}{c(1-\epsilon)}$. Now in the well mixed case 
		a large enough matching is found. We have:
		\begin{equation}
		\frac{
			\frac{1}{2e}b^1 - \frac{1 - J(c,\epsilon,z)}{2}s^\star
		}
		{
			b^1
		}
		\ge \frac{1}{2e} - (1 - J(c,\epsilon,z)),
		\end{equation}
		and
		\begin{equation}
		\frac{
			\frac{J(c,\epsilon,z)}{2}\cdot \frac{((1-c)(1-\epsilon) - 
				\frac{1}{2})m(S,B)}{4}
		}
		{
			m(S,B)
		}
		\ge \frac{
			J(c,\epsilon,z)\cdot ((1-c)(1-\epsilon) - 
			\frac{1}{2})m(S,B)
		}{8}.
		\end{equation}
		Therefore, the competitive ratio is:
		\begin{align*}
		\rho &\ge \frac{	\frac{1}{2e}b^1 + \frac{J(c,\epsilon,z)}{2}\cdot 
			\frac{((1-c)(1-\epsilon) - 
				\frac{1}{2})m(S,B)}{4} - \frac{1 - J(c,\epsilon,z)}{2}s^\star}
		{m(S,B) + b^1}\\
		&\ge \min\left\{\frac{1}{2e} - (1 - J(c,\epsilon,z)),
		\frac{J(c,\epsilon,z)\cdot ((1-c)(1-\epsilon) - 
			\frac{1}{2})m(S,B)
		}{8}\right\}.
		\end{align*}
	\end{enumerate}
	Therefore, $c,\epsilon$ and $N$ are selected to maximise the minimum 
	amongst all cases of $z$, which is picked by the adversary. 
	Computationally, we find that setting $c = 0.3, \epsilon = 0.2758$ and $N 
	= 114$ yields $\rho \ge 1/1434$.
\end{proof}
If we are given that $|M(S,B)|$ will be large, then this algorithm can be 
adapted to have greatly improved competitive ratio. In particular, setting 
$c=\epsilon=0.01$ achieves $\rho \ge 1/17$ as $|M(S,B)| \rightarrow 
\infty$.

\bibliography{intermediation}
\bibliographystyle{plain}

\end{document}